\newtheorem{theorem}{Theorem}
\newtheorem{lemma}{Lemma}
\newtheorem{proposition}{Proposition}
\newtheorem{assumption}{Assumption}
\newtheorem{remark}{Remark}
\title{\LARGE \bf
Optimal Distributed Controller Design with\\
Communication Delays: Application to Vehicle Formations
}
\author{Hamid Reza Feyzmahdavian, Assad Alam, and Ather Gattami
\thanks{H. R. Feyzmahdavian is with Electrical Engineering, ACCESS Linnaeus Centre, Royal Institute of Technology, 100 44
Stockholm, Sweden.
        {\tt\small hamidrez@kth.se}}%
\thanks{A. Alam is with Research and Development, Scania CV AB, 151 87 S\"{o}dert\"{a}lje, Sweden.
        {\tt\small assad.alam@scania.com}}%
\thanks{A. Gattami is with Electrical Engineering, ACCESS Linnaeus Centre, Royal Institute of Technology, 100 44
Stockholm, Sweden.
        {\tt\small gattami@kth.se}}%
\thanks{This work was supported by Scania CV AB, VINNOVA - FFI, and the Swedish Research Council.}
}
\begin{document}

\maketitle
\thispagestyle{empty}
\pagestyle {empty}

\begin{abstract}

This paper develops a controller synthesis algorithm for distributed LQG control problems under output
feedback. We consider a system consisting of three interconnected linear subsystems with a delayed information sharing structure. While the state-feedback case of this problem has previously been solved, the extension to output-feedback is nontrivial, as the classical separation principle fails. To find the optimal solution, the controller is decomposed into two independent components. One is delayed centralized LQR, and the other is the sum of correction terms based on additional local information. Explicit discrete-time equations are derived whose solutions are the gains of the optimal controller.\footnote{A preliminary version of this work was presented in~\cite{Feyzmahdavian:12}.}

\end{abstract}

\section{INTRODUCTION}

The systems to be controlled are in many application domains getting larger and more complex. When there is interconnection between different dynamical systems, conventional optimal control algorithms provide a solution where centralized state information is required. However, it is often preferable and sometimes necessary to have a distributed control structure, since in many practical problems, the physical or communication constraints often impose a specific interconnection structure. Hence, it is interesting to design distributed feedback controls for systems of a certain structure and examine their overall performance.

The control problem and methodology in this paper is motivated by systems involving a chain of closely spaced heavy duty vehicles (HDVs), generally referred to as vehicle platooning. The objective is to maintain a predefined headway to the vehicle ahead, while maintaining safety and minimizing the fuel consumption. Information technology is paving its path into the transport industry, enabling the possibility of automated control strategies. Governing vehicle platoons by an automated control strategy, the overall traffic flow is expected to improve \cite{IoannouChien93} and the road capacity will increase significantly \cite{DeSchutter:99}, without endangering safety \cite{Alam11}. By traveling at a close intermediate spacing the air drag is reduced for each vehicle in the platoon. Thereby, the control effort and inherently the fuel consumption can be reduced significantly \cite{Alam10}. This creates a coupling of the dynamics between neighboring vehicles throughout the platoon. However, as the intermediate spacing is reduced the control becomes tighter due to safety aspects; mandating an increase in control action and inherently the fuel consumption through additional acceleration and braking. The fuel consumption constitutes approximately 30\,\% of the overall cost for a fleet owner \cite{Alam:Lic}. Hence, it is of vast interest for the industry to find a fuel optimal control. Considering the physical constraints in radio, it cannot be assumed that state information is available at every instance in time. Thus, a distributed control strategy is crucial for practical implementation.

In recent work \cite{Bamieh02,DAndrea98,Feyzmahdavian:12-1}, distributed control has been studied under the assumption of spatial invariance. ~Control for chain structures in the context of platoons has been studied through various perspectives, e.g., \cite{Bamieh08,Barooah05,BamiehJovanovic05,Hedrick96,Varaiya93}. It has been shown that control strategies may vary depending on the available information within the platoon. Maintaining a suitable relative distance, stability and robustness of the platoon have been identified to be amongst the main criteria to be considered. However, communication constraints have not in general been considered in control design for platooning applications and the controllers have mainly been ad hoc by tuning the control parameters. In \cite{rantzer:acc06,gattami:06}, linear quadratic Gaussian (LQG) control under appropriate assumptions on communication delays between the controllers was considered. While a computationally efficient solution was presented for a sequence of vehicles moving in formation, the controller structure is not provided by the corresponding semi-definite programming. A structured sequential design was introduced in \cite{Alam11:Dec}, where the preceding vehicle's dynamics along with its states were conveyed through wireless communication. It resulted in a suboptimal control strategy, where physical coupling to a follower vehicle and communication delays were not considered. Mounted radar sensors allows each vehicle to measure the relative distance and velocity of the preceding vehicle. Additional information, providing local information, has lately been introduced through wireless information. However, wireless systems introduce information delays to the system in certain cases due to limitations in radio. Furthermore, varying external environment factors impose process disturbances on the system.

In this work, we are primarily concerned with forming a distributed control, that accounts for the interconnection between neighboring vehicles, correlated process disturbances, as well as communication delays. The control is solely based on local model knowledge, over the class of LQG control for chain structured interconnection graphs. The received information is assumed to be common after two time step delays.

The main contribution of this paper is to derive an LQG controller, which is easy to implement and optimal under a delayed information sharing pattern for chain structures.  In addition to communication delays, the distributed optimal control is based upon systems with interconnected dynamics to both neighboring vehicles and local state information. Derived from the characteristics of actual Scania HDV's, we present a discrete system model that includes physical coupling with both neighboring vehicles. We also investigate the performance of the proposed controllers, under normal operating conditions for an HDV platoon, with respect to physical constraints that are imposed in a practical set-up.

The outline of the paper is as follows. The general system and problem description is given in Sec.~\ref{sec:Problem}, which in turn determines the structure of the optimal controller. The theoretical premise for the optimal controller is presented in Sec.~\ref{sec:OptimalController}, where it is shown that the problem can be decomposed into two separate optimization problems. Finally, we evaluate the performance of the derived controller through numerical results in Sec.~\ref{sec:Performance} and give concluding remarks in Sec.~\ref{sec:Concl}.

\subsection*{Notation}
\label{sec:preliminaries}
Throughout the paper, we use the following notation: matrices are written in uppercase letters and vectors in lowercase letters.  The $i^{th}$ component of a vector $x$ is denoted by $x_i$. Let
$[x]_S$ be the sub-vector of $x$ containing only those components with indices in set $S$. For instance, if $S=\{1,3\}$, then $[x]_S$ is given by $[x]_S=\begin{bmatrix} x_1 & x_3\end{bmatrix}^T$. The sequence $x(0)$, $x(1)$, $\ldots$ , $x(k)$ is denoted by $x(0:k)$.

\text{diag}$(x)$ denotes a diagonal matrix whose diagonal elements are given by those of the enclosed vector $x$. Let $X$ be a matrix partitioned into blocks. We use $[X]_{ij}$ and $[X]_i$ to represent the block in block position $ij$ and $i$th block row, respectively. $[X]_{S_1S_2}$ denotes
the sub-matrix of $X$ containing exactly those rows and columns corresponding to the sets $S1$ and $S2$, respectively. For instance
$[X]_{\{1\}\{2,3\}}=\begin{bmatrix} X_{12} & X_{13}\end{bmatrix}$. The trace of square matrix is denoted by $\textbf{Tr}\{X\}$. We use $X^+$ and $X^-$ to represent $X(k+1)$ and $X(k-1)$ respectively, when appropriate.

Given $A\in\mathbb{R}^{m\times n}$, we can write $A$ in terms of its columns as
$A=\begin{bmatrix} a_1 & \cdots & a_n\end{bmatrix}$. The operation \textrm{vec}$(A)$ results in a $mn \times 1$ column vector
$\textrm{vec}(A)=\begin{bmatrix} a_1^T & \cdots & a_n^T\end{bmatrix}^T$. We denote by $\textrm{vec}^{\star}(A)$, the sub-vector of $\textrm{vec}(A)$ containing only nonzero elements. Let $A\in\mathbb{R}^{m\times n}$ and $B\in\mathbb{R}^{r\times s}$, then the operation $A\otimes B \in\mathbb{R}^{mr\times ns}$ denotes the \textit{Kronecker product} of $A$ and $B$.

We denote the expectation of a random variable $x$ by $\textbf{E}\{x\}$. The conditional expectation of $x$ given $y$ is denoted by $\textbf{E}\{x|y\}$.

\section{System Model and Problem Description}
\label{sec:Problem}
In this section we present the physical properties of the system that we are considering. We state the nonlinear dynamics of a single vehicle and the model for the aerodynamics, which induces the physical coupling. Then we present the linear discrete system model for a heterogeneous HDV platoon and its associated cost function. The communication constraints and physical coupling is then used to motivate the structure of the controller. Finally, the problem formulation is given.

\subsection{System Model}
We consider an HDV platoon as depicted in Fig.~\ref{fig:platoon}. The state equation of a single HDV is modeled as,

\begin{equation}
	\begin{aligned}
 		\dot{s} 		&= v,\\
		m_t\dot{v}	&=F_{engine} -F_{brake}-F_{air drag}(v)\\
 									& \hspace{15.6mm}				 -F_{roll}(\alpha)-F_{gravity}(\alpha), \\
 									&=k_uu -k_bF_{brake}-k_dv^2 \\
 									& \hspace{11.6mm}-k_{fr}\cos\alpha-k_g\sin\alpha,
	\end{aligned}
\label{eq:Forces}
\end{equation}

\noindent where $v$ is the vehicle velocity, $m_t$ denotes the accelerated mass and $u\in \mathbb{R}$ denotes the net engine torque. $k_u, k_b, k_d, k_{fr}$, and $k_g$ denote the characteristic vehicle and environment coefficients for the engine, brake, air drag, road friction, and gravitation respectively.

\begin{figure}[t]
\begin{center}
\includegraphics[width=8.4cm]{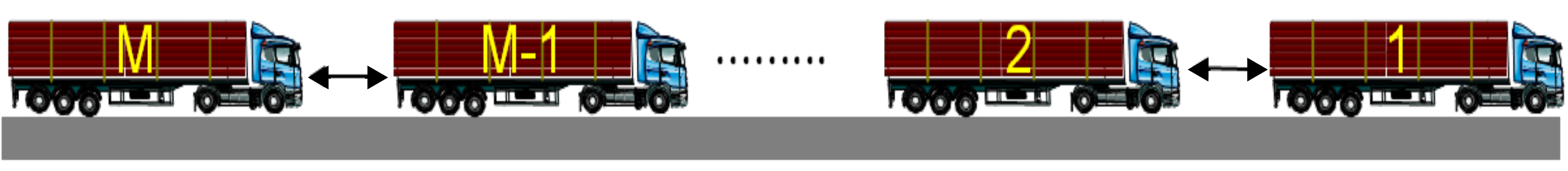}    
\caption{The figure shows a platoon of $M$ heavy duty vehicles, where each vehicle is able to communicate with its neighbors.}
\label{fig:platoon}
\end{center}
\end{figure}

The variation in aerodynamics between the vehicles is essential in the analysis of fuel reduction potential for HDVs. For a single HDV it can amount up to 50\,\% of the total resistive forces at full speed. It is significantly reduced when operating in a platoon formation and a coupling between the vehicles is induced. To account for the aerodynamics, the air drag characteristic coefficient in \eqref{eq:Forces} can be modeled as \cite{Hucho}
\begin{align*}
\begin{split}
\tilde{k}_d&=k_d(1-\frac{\Phi(d)}{100}-\frac{\phi(d)}{100}),\\
\Phi(d)&=\alpha_1d + \alpha_2, 0 \leq d \leq 60\\
\phi(d)&=\beta_1d + \beta_2, 0 \leq d \leq 15
\end{split}
\end{align*}
\noindent where $d$ is the longitudinal relative distance between two vehicles, $\Phi(d)$ and $\phi(d)$ are linear piecewise affine functions of the change in air drag due to a preceding and a follower vehicle respectively, and $\alpha_1, \alpha_2, \beta_1, \beta_2$ are positive constants. The relative distance reference could be constant or, as in this case, time varying. It is determined by setting a desired time gap $\tau$\,s, which in turn determines the spacing policy as $d_{ref}(k)={\tau}v(k)$. Thereby, the vehicles will maintain a larger intermediate spacing at higher velocities.

When studying the behavior of an HDV platoon, the velocity does not deviate significantly from the lead vehicle's velocity. Ideally, all vehicles should maintain a constant speed and intermediate distance. Thus, a linearized model should give a sufficient description of the system behavior under these conditions. By linearizing and applying a one step forward discretization to \eqref{eq:Forces}, the discrete model for an HDV platoon with respect to a set reference velocity, an engine torque which maintains the velocity, a fixed spacing between the vehicles, and a constant slope is hence given by

\begin{equation*}
x(k+1) = Ax(k) + Bu(k)+w(k),
\end{equation*}

\noindent where
\begin{small}
\begin{align*}
A&= \begin{bmatrix}{\Theta_1} & \gamma_2 & 0 & 0 & 0 & \cdots & 0 & 0 & 0 \\
1 & 1 & -1 & 0 & 0 & \cdots & 0 & 0 & 0 \\
0 & \delta_2 & \Theta_2 & \gamma_3 & 0 & \cdots & 0 & 0 & 0 \\
0 & 0 & 1 & 1 & -1 & \cdots & 0 & 0 & 0 \\
0 & 0 & 0 & \delta_3 & \Theta_3 & \cdots & \gamma_4 & 0 & 0 \\
\vdots & \vdots & \vdots & \vdots & \vdots & \ddots & \vdots & \vdots & \vdots \\
0 & 0 & 0 & 0 & 0 & \cdots & \Theta_{M-1} & \gamma_{M-1} & 0 \\
0 & 0 & 0 & 0 & 0 & \cdots & 1 & 1 & -1 \\
0 & 0 & 0 & 0 & 0 & \cdots & 0 & \delta_M & \Theta_M \\
\end{bmatrix},
\end{align*}
\begin{align}
\begin{split}
B&= \begin{bmatrix} k_{u_1} & 0 & 0 & \cdots & 0 \\
0 & 0 & 0 & \cdots & 0 \\
0 & k_{u_2} & 0 & \cdots & 0 \\
0 & 0 & 0 & \cdots & 0 \\
0 & 0 & k_{u_3} & \cdots & 0 \\
\vdots & \vdots & \vdots & \ddots & \vdots \\
0 & 0 & 0 & \cdots & 0 \\
0 & 0 & 0 & \cdots & k_{u_M} \end{bmatrix}, \quad x=\begin{bmatrix}  {v_1} \\
{d_{12}}  \\
{v_2} \\
{d_{23}} \\
{v_3} \\
\vdots \\
{v_{M-1}} \\
{d_{(M-1)M}} \\
{v_M} \end{bmatrix}, \\
u&=\begin{bmatrix}  {u_1} \\
{u_2}  \\
{u_3} \\
\vdots \\
{u_M} \end{bmatrix}, \begin{array}{rl}
\Theta_1 & =1-T_s2k_d(d_0)v_0/m_t,\\
\Theta_i & =1-T_s2k_d\Phi{(d_0)}v_0/m_t, \quad i=2,\dots, M,\\
\delta_i & =-T_s\alpha_1k_dv_0^2/m_t,\\
\gamma_i & =-T_s\beta_1k_dv_0^2/m_t,
\end{array}
\end{split}
\label{eq:plantModel}
\end{align}
\end{small}
\noindent and $T_s$ is the sampling time. Thus, the system has a block diagonal structure and can be grouped into subsystems as indicated in \eqref{eq:plantModel}. The general representation of the derived system can be stated as

\begin{small}
\begin{align}
\hspace{-0.2cm}\begin{bmatrix} x_{1}(k+1) \\
x_{2}(k+1) \\
x_{3}(k+1) \\
\vdots \\
x_{M}(k+1)
 \end {bmatrix}=&\begin {bmatrix}A_{11} &  A_{12} & 0 & \cdots & 0\\
 A_{21} & A_{22} & A_{23} & \cdots & 0\\
 0 & A_{32} & A_{33}  & \cdots & 0\\
\vdots & \vdots & \vdots & \ddots & \vdots \\
0 & 0 & 0  & \cdots & A_{MM}
\end {bmatrix}\begin{bmatrix} x_{1}(k) \\
 x_{2}(k) \\
 x_{3}(k) \\
\vdots \\
x_{M}(k)
 \end {bmatrix}\nonumber\\
+&\begin{bmatrix} B_{1} &  0 &  0 & \cdots & 0\\
 0 & B_{2} &  0 & \cdots & 0\\
 0 &  0 &  B_{3} & \cdots & 0\\
 \vdots &  \vdots &  \vdots & \ddots & \vdots\\
 0 &  0 &  0 & \cdots & B_{M}\\
 \end {bmatrix}
\begin{bmatrix} u_{1}(k) \\
 u_{2}(k) \\
 u_{3}(k)\\
\vdots \\
u_{M}(k)
 \end {bmatrix}+w(k)
\label{mainsystem}
\end{align}
\end{small}

\noindent where the corresponding vehicle states for each subsystem are
\begin{equation*}
x_1(k)=v_1(k),\quad x_i(k)=\begin{bmatrix}d_{i-1,i} \\ v_i
\end{bmatrix},\quad i=2,\dots,M.
\end{equation*}

In practice, many random disturbances are imposed upon a vehicle in motion. The varying road topology has a strong impact due the extensive mass of the HDVs. Weather conditions might vary and traffic conditions might change. Furthermore, variation in wind affects all the vehicles in the platoon and therefore the process noise is considered to be correlated. Hence, the disturbance, $w(k)$ in \eqref{mainsystem}, is assumed to be a Gaussian white noise with a full positive definite covariance matrix $W$. We also assume that the initial state $x(0)$ is uncorrelated with $w(k)$ for all $k$, with zero mean and covariance matrix $P_0$.

While a general problem was defined, for simplicity, consider an $M=3$ HDV platoon. In this case, the dynamics of the system given in \eqref{eq:3dynamics} is
\begin{small}
\begin{align}
x_{1}(k+1)&=A_{11}x_{1}(k)+A_{12}x_{2}(k)+B_1u_1(k)\nonumber\\
x_{2}(k+1)&=A_{21}x_{1}(k)+A_{22}x_{2}(k)+A_{23}x_{3}(k)+B_2u_2(k)\nonumber\\
x_{3}(k+1)&=A_{32}x_{2}(k)+A_{33}x_{3}(k)+B_3u_3(k)
\label{eq:3dynamics}
\end{align}\end{small}
It can be seen in \eqref{eq:3dynamics} that the state of vehicle $1$ is affected by the states of vehicle $2$ in the next time step. Whereas, the state of vehicle $1$ affects the states of $3$ after two time steps, through vehicle $2$. Vehicle $2$ on the other hand is affected by both vehicle $1$ and $3$ in the next time step.

The local models can be conveyed at a single point in time between each subsystem, through wireless communication. However, the system is time critical due to safety aspects and communication should be kept at minimum so the channel is not congested and latency is introduced. Assume that passing information from one vehicle to another vehicle takes one time step, so the available information set of each vehicle at time $k$ can be described as
\begin{small}
\begin{align}
\mathcal{I}_{1}(k)&=\{x_1(k),x_1(k-1),x_2(k-1),x(0:k-2)\}\nonumber\\
\mathcal{I}_{2}(k)&=\{x_2(k),x(k-1),x(0:k-2)\}\nonumber\\
\mathcal{I}_{3}(k)&=\{x_3(k),x_2(k-1),x_3(k-1),x(0:k-2)\}
\label{historyset}
\end{align}\end{small}
The three vehicles share all past information with two-step communication delay, as described in \eqref{historyset}. The assumptions about the information structure and the sparsity of dynamics guarantee that information propagates at least as fast as the dynamics. This information pattern is a  simple case of \textit{partially nested} information structure. It is shown in \cite {Chi:72} that if the information structure is partially nested, then the optimal controller exists, it is unique, and linear. Therefore, the optimal controller for three vehicles under the given information set has the form
\begin{small}\begin{eqnarray}
u_1(k)&=&f_{11}\bigl(x_1(k)\bigr)+f_{12}\bigl(x_1(k-1),x_2(k-1)\bigr)\nonumber\\
&&\hspace{1.73cm}+f_{13}\bigl(x(0:k-2)\bigr)\nonumber\\
u_2(k)&=&f_{21}\bigl(x_2(k)\bigr)+f_{22}\bigl(x(k-1)\bigr)\nonumber\\
&&\hspace{1.73cm}+f_{23}\bigl(x(0:k-2)\bigr)\nonumber\\
u_3(k)&=&f_{31}\bigl(x_3(k)\bigr)+f_{32}\bigl(x_2(k-1),x_3(k-1)\bigr)\nonumber\\
&&\hspace{1.73cm}+f_{33}\bigl(x(0:k-2)\bigr)
\label{uc}
\end{eqnarray}\end{small}
where $f_{ij}$ denotes a linear function in all its variables. Consequently, the optimal control $u(k)$ can be expressed as
\begin{small}\begin{eqnarray}
u(k)=F(k) x(k)+G(k){x(k-1)}+f\bigl({x}(0:k-2)\bigr)
\label{ucontrol}
\end{eqnarray}\end{small}
where $f={\begin{bmatrix}  f_{13}^T & f_{23}^T & f_{33}^T \\ \end{bmatrix}}^T$ and
\begin{align*}
F(k)&=\begin{bmatrix} F_{11} &  0 &  0 \\  0 & F_{22} &  0 \\  0 & 0 &  F_{33}\\ \end{bmatrix},\;
G(k)&=\begin{bmatrix} G_{11} &  G_{12} & 0 \\  G_{21} &  G_{22} &  G_{23} \\  0 & G_{32} & G_{33}\\ \end{bmatrix}.
\end{align*}

\subsection{Cost Function}

The objective of the lead vehicle is to minimize the fuel consumption and control input, while maintaining a set reference velocity. The objective of the follower vehicles in addition is to follow the preceding vehicles velocity, while maintaining a set intermediate spacing. Hence, similar to what we presented for the continuous LQR in \cite{Alam11:Dec}, the weights for a $M$ HDV platoon can be set up based upon the performance objectives as
{\small{
\begin{align}
J(u^*)=&\min_{u}~\sum_{k=0}^{N-1}\Big(\sum_{i=2}^{M}w_i^{\tau}(d_{(i-1)i}(k)-\tau v_i(k))^2 \nonumber\\
&\hspace{13.5mm}+ w_i^{\Delta{v}}(v_{i-1}(k)-v_i(k))^2\nonumber\\
& \hspace{13.5mm}+w_i^{d}d_{(i-1)i}^2(k)+\sum_{i=1}^{M}w_i^{v}v_{i}^2(k)+w_i^{u_i}u_i^2(k)\Big) \nonumber\\
=&\min_{u}~\sum_{k=0}^{N-1}\sum_{i=2}^{M}\begin{bmatrix} v_{i-1}(k)\\
d_{(i-1)i}(k)\\
v_i(k)\end{bmatrix}^TQ_i\begin{bmatrix} v_{i-1}(k)\\
d_{(i-1)i}(k)\\
v_i(k)\end{bmatrix}+R_iu_i^2(k)\nonumber\\
&\hspace{13.5mm}+w^{v_1}v_{1}^2(k)+w^{u_1}u_1^2(k)
\label{eq:weights}
\end{align}
}}
\noindent where
{\small{
\begin{align*}
\begin{split}
&Q_i=\begin{bmatrix}
w_i^{\Delta{v}} & 0 & -w_i^{\Delta{v}} \\
0 & w_i^{d}+w_i^{\tau} & -\tau{w}_i^{\tau} \\
-w_i^{\Delta{v}} & -\tau{w}_i^{\tau} & \tau^2w_i^{\tau}+w_i^{\Delta{v}}+w_i^{v}
\end{bmatrix},\\
&Q_1=\begin{bmatrix} w^{v_1} & 0 \\ 0 & w^{u_1} \end{bmatrix}, R_i=w_i^{u_i}.
\end{split}
\end{align*}
}}

The weights in \eqref{eq:weights} give a direct interpretation of how to enforce the objectives for a vehicle traveling in a platoon. The value of $w_i^{\tau}$ determines the importance of not deviating from the desired time gap. Hence, a large $w_i^{\tau}$ puts emphasis on safety. $w_i^{\Delta{v}}$ creates a cost for deviating from the velocity of the preceding vehicle, and $w_i^{u_{i}}$ punishes the control effort which is proportional to the fuel consumption. The following terms, $w_i^{d}, w_i^{v}$, put a cost on the deviation from the linearized states. Note that the main objective is to maintain a set intermediate distance, while maintaining a fuel efficient behavior. Therefore, $w_i^{\tau}, w_i^{\Delta{v}}$ and $w_i^{u_{i}}$ must be set larger than the remaining weights.

\subsection{Problem Formulation}
We consider a HDV platooning scenario where each vehicle only receives information regarding the relative position and velocity of the immediate neighboring vehicles. The objective is to design a controller that can handle a two time step delay.

The aim is to utilize the given structure of the considered system, where we want to minimize the cost function

\begin{align}
\begin{split}
J=&\textbf{E}\{x(N)^{T}Q_{0}x(N)\}\\
&+\sum_{k=0}^{N-1}\textbf{E}\{x^T(k)Qx(k)+u^T(k)Ru(k)\},
\end{split}
\label{cost}
\end{align}

\noindent subject to the sparse system dynamics in \eqref{mainsystem} and the performance objectives in \eqref{eq:weights}. The primary difficulty arises from the imposed information constraints given in \eqref{historyset}.

Thus, the problem that we solve in this paper is finding an analytical expression for an optimal control input $u_i(k)$, which must be a function of the admissible information set $\mathcal{I}_i(k)$, where each subsystem control input is unique and a linear function denoted as
\begin{align}
u_i(k)=\mu_i\bigl(\mathcal{I}_i(k)\bigr),\; i=1,\dots, M.
\label{control}
\end{align}

\begin{assumption}
The matrices $Q_0$ and $Q$ in \eqref{cost} are positive semi-definite, and $R$ is positive definite.
\label{costmatrix}
\end{assumption}

\section{Main Result}
In this section we present the optimal controller for three-vehicle problem. The proof for this result is presented in the remaining sections.
\begin{theorem}
\emph{
Suppose that $W$ is positive definite and that
Assumption $1$ holds. Define the matrix $D\triangleq\begin{bmatrix} F & M\end{bmatrix}$ where $M$ has the same sparsity structure as $G$. Let $S$ be the index set of non-zero elements of} $\text{vec}(D)$,
$$
S\triangleq\left\{i:\;\textrm{vec}_i(D) \neq 0\right\}.
$$
\emph{Suppose there exists a stabilizing solution $X$ to the algebraic Riccati equation
\begin{small}\begin{align*}
X=A^TXA+Q+A^TXB(B^T XB+R)^{-1}B^T XA
\end{align*}\end{small}
We then define
\begin{small}\begin{align*}
H&=B^TXB+R\vspace{0.1cm}\\
L&=(B^TXB+R)^{-1}B^T XA\vspace{0.1cm}\\
\end{align*}\end{small}
and let}
\begin{small}\begin{align*}
Y&=\begin{bmatrix} W\otimes (H+B^TL^THLB) & -W\otimes B^TL^TH\\ -W\otimes HLB & W\otimes H\end{bmatrix}\\
b&=\begin{bmatrix} W\otimes H \\ 0 \end{bmatrix}\mbox{vec}(L)+\begin{bmatrix} -W\otimes B^TL^TH \\ W\otimes H \end{bmatrix}\textrm{vec}(LA)
\end{align*}\end{small}
\emph{Then, the optimal controller gains are given by:}
\begin{small}\begin{align*}
\textrm{vec}^{\star}(F)&=\begin{bmatrix} I & 0\end{bmatrix} [Y]_{SS}^{-1}[b]_S \vspace{1mm}\\
\textrm{vec}^{\star}(M)&=\begin{bmatrix} 0 & I\end{bmatrix} [Y]_{SS}^{-1}[b]_S
\end{align*}\end{small}
\emph{and the optimal controller has the realization
\begin{small}\begin{align*}
\zeta(k+1)=&Ax(k)+Bu(k)\\
\xi(k+1)=&A\zeta(k)+BM(x(k-1)-\zeta(k-1))+BL\xi(k)\\
u(k)=&F(x(k)-\zeta(k))\\
&+M(x(k-1)-\zeta(k-1))+L\xi(k)
\end{align*}\end{small}
}
\end{theorem}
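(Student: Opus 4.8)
The plan is to use the partially nested structure to recast the decentralized problem, after a change of variables, as two separately solvable pieces: a standard centralized LQR problem driven by a delayed noise, which produces the Riccati solution $X$, the gain $L$, and the estimator $\xi$; and a finite-dimensional sparsity-constrained quadratic program whose solution is the pair of correction gains $F,M$. This realizes the ``delayed centralized LQR plus local corrections'' split announced in the abstract, with the two subproblems solved in sequence and the first one independent of $F,M$.

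First I would introduce the one-step noise-free prediction $\zeta(k)\triangleq Ax(k-1)+Bu(k-1)$, so that $x(k)-\zeta(k)=w(k-1)$ and $x(k-1)-\zeta(k-1)=w(k-2)$, and the common-information estimate $\xi(k)\triangleq\textbf{E}\{x(k)\mid x(0:k-2)\}$. Since the closed loop is linear, at time $k$ each subsystem can reconstruct $x(0:k-2)$ and $u(0:k-2)$, so $\zeta(k-1)$ and $\xi(k)$ are common knowledge while each subsystem knows its own block of $\zeta(k)$. I would then show that the admissibility constraints \eqref{historyset} are equivalent to writing every admissible linear controller as
\begin{align*}
u(k)=F\bigl(x(k)-\zeta(k)\bigr)+M\bigl(x(k-1)-\zeta(k-1)\bigr)+v(k),
\end{align*}
where $F$ is block diagonal, $M$ has the sparsity of $G$, and $v(k)$ is an unrestricted linear function of the common information $x(0:k-2)$; the sparsity patterns are precisely those noise components $w_j(\cdot)$ each subsystem observes (using that $A+BF$ inherits the sparsity of $A$, so the extra quantity $\zeta(k)$ reveals nothing new). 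Carrying out this information bookkeeping for the three-agent chain, and arguing this realization is without loss of optimality, is the step I expect to be the main obstacle; everything after it is computation.

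Next I would establish the cost decomposition. Expanding the dynamics gives $x(k)=\xi(k)+(A+BF)w(k-2)+w(k-1)$ and $u(k)=v(k)+Fw(k-1)+Mw(k-2)$; since $\xi(k)$ and $v(k)$ are measurable with respect to $\{x(0),w(j):j\le k-3\}$, they are uncorrelated with $w(k-1)$ and $w(k-2)$, so all cross terms in \eqref{cost} drop out and
\begin{align*}
J=&\sum_k\textbf{E}\bigl\{\xi(k)^TQ\xi(k)+v(k)^TRv(k)\bigr\}\\
&+\sum_k\bigl(\text{Tr}((A+BF)^TQ(A+BF)W)+\text{Tr}((F^TRF+M^TRM)W)\bigr)+\text{const}.
\end{align*}
A short manipulation shows $\xi$ obeys the linear time-invariant recursion $\xi(k+1)=A\xi(k)+Bv(k)+\widetilde Bw(k-2)$ with $\widetilde B=A^2+ABF+BM$, so the first sum is an ordinary centralized LQR problem for $\xi$ with control $v(k)$ restricted only to being common-information measurable. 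Because $\textbf{E}\{w(k-2)\mid x(0:k-2)\}=0$, the conditional mean $\xi(k)$ is a sufficient statistic, hence the optimal $v$ is the static feedback $v(k)=L\xi(k)$ with $L$ the gain of the algebraic Riccati equation (which does \emph{not} depend on $F$ or $M$), and its optimal value is, per stage and in steady state, $\text{Tr}(X\widetilde BW\widetilde B^T)$. Propagating $\xi(k)=\textbf{E}\{x(k)\mid x(0:k-2)\}$ through the dynamics, and noting that only $M(x(k-1)-\zeta(k-1))+L\xi(k)$ is the predictable part of $u(k)$, yields exactly the $\zeta$ and $\xi$ equations of the realization.

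Finally, substituting $v(k)=L\xi(k)$ leaves an objective depending on $(F,M)$ alone, namely $\text{Tr}(X\widetilde BW\widetilde B^T)+\text{Tr}((A+BF)^TQ(A+BF)W)+\text{Tr}(F^TRFW)+\text{Tr}(M^TRMW)$ up to a constant. Using the identity $\text{Tr}(U^TVUW)=\text{vec}(U)^T(W\otimes V)\text{vec}(U)$, the Riccati equation to rewrite $B^T(A^TXA+Q)B+R$ in terms of $H$ and $L$, and $B^TXA^2=HLA$, this objective is exactly the quadratic $\tfrac12 z^TYz-b^Tz+\text{const}$ in $z=\begin{bmatrix}\text{vec}(F)\\\text{vec}(M)\end{bmatrix}$ with the $Y$ and $b$ of the theorem. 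Since $W\succ0$, $R\succ0$ and $X\succeq0$, this quadratic is strictly convex, so over the prescribed sparsity index set $S$ its unique minimizer solves $[Y]_{SS}\,z_S=[b]_S$, i.e. $z_S=[Y]_{SS}^{-1}[b]_S$; reading off the $F$- and $M$-blocks gives $\text{vec}^{\star}(F)$ and $\text{vec}^{\star}(M)$, and assembling the estimator equations with this static feedback gives the stated realization. The only remaining routine points are checking that the stabilizing-solution hypothesis makes every Lyapunov/covariance recursion well posed and that the finite-horizon optimal controller tends to this steady-state one.
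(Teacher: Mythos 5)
Your proposal is correct and takes essentially the same route as the paper: the identical orthogonal decomposition of $x(k)$ and $u(k)$ into a common-information component (optimized by the delayed centralized LQR gain $L$ via certainty equivalence on the $\xi$-dynamics, which is exactly the paper's Lemma~3 and Theorem~2) plus a component driven by $w(k-1)$ and $w(k-2)$ with sparse gains $F,M$, followed by vectorization into an unconstrained positive-definite quadratic program with solution $[Y]_{SS}^{-1}[b]_S$ and the same estimator realization. The only organizational difference is that the paper first applies the completion-of-squares of Proposition~1 so that $J_u$ splits into two fully decoupled subproblems (with $J_u^2$ driven to zero and independent of $F,M$), whereas you expand the raw cost directly and must carry the noise-injection term $\textbf{Tr}\{X\widetilde{B}W\widetilde{B}^T\}$ with $\widetilde{B}=A(A+BF)+BM$ from the $\xi$-subproblem into the $(F,M)$ optimization before the Riccati identity collapses everything to the same $Y$ and $b$ (modulo the paper's own sign convention on $L$).
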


Note that blocks of matrices $F$ and $M$ can be computed from the $\textrm{vec}^{\star}(F)$ and $\textrm{vec}^{\star}(G)$, respectively. For example, $\textrm{vec}^{\star}(F)=\textrm{vec}\left(\begin{bmatrix} F_{11} & F_{22} & F_{33}\end{bmatrix}\right)$.
It will be shown that $\xi(k)$ is the minimum-mean square estimate of $x(k)$ given the common information $x(0:k-2)$; that is, $\xi(k)=\textbf{E}\{x(k)|x(0:k-2)\}$. Thus, the optimal controller of three-vehicle problem is the centralized $LQR$ controller
under the classical information structure with two-step delay plus correction terms based on the local information at time $k$.

\section{Optimal Controller Derivation}
\label{sec:OptimalController}
In this section, we present the preliminary lemmas that are used to prove the results in Theorem $1$.
Before proceeding further, we need to state the following proposition which later permits us to decompose
$J$ into two separate parts.
\begin{proposition}[\cite {Astrom:70}]\emph{
Define the matrices
\begin{small}
\begin{align}
X(k)=&A^TX^+A+Q\label{lqg}\\
&-(A^TX^+B)(B^T X^+ B+R)^{-1}(B^T X^+A)\nonumber\\
H(k)=&B^T X^+ B+R\vspace{0.1cm}\nonumber\\
L(k)=&(B^T X^+ B+R)^{-1}B^T X^+ A\vspace{0.1cm}\nonumber
\end{align}
\end{small}
\noindent for $k=0,\cdots,N-1$ and where $X(N)=Q_0$. Then the cost function \eqref{cost} can be written as
\begin{small}
\begin{align}
J=&\underbrace{\sum_{k=0}^{N-1}\textbf{E}\left\{\bigl(u(k)-L(k)x(k)\bigr)^TH(k)\bigl(u(k)-L(k)x(k)\bigr)\right\}}_{J_u}\nonumber\\
&+\underbrace{x^T(0)X(0)x(0)+\sum_{k=0}^{N-1}\textbf{Tr}\{X(k+1)W\}}_{J_w}\nonumber
\label{lcost}
\end{align}
\end{small}
\noindent where both the zero-mean property of $w(k)$ and independence of $w(k)$ and $(x(k),u(k))$ are exploited.
Moreover, $J_{w}$ is independent of $u$.}
\label{dec}
\end{proposition}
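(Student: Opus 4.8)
\emph{Proof sketch.} The plan is to run the classical completion-of-squares argument directly on the backward Riccati recursion \eqref{lqg} that already defines $X(k)$, $H(k)$ and $L(k)$, rather than re-deriving that recursion by induction on a value function. I would introduce the telescoping quantity $V_k\triangleq\textbf{E}\{x^T(k)X(k)x(k)\}$ for $k=0,\dots,N$, and aim to rewrite each one-step difference $V_k-V_{k+1}$ as the stage cost minus the trace term minus a residual quadratic in $u(k)$, and then sum over $k$.

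For a single step, I would substitute $x(k+1)=Ax(k)+Bu(k)+w(k)$ into $V_{k+1}$. Since $u(k)$ is a function of the information available at time $k$, it depends only on $w(0),\dots,w(k-1)$ and is therefore independent of $w(k)$; as $w(k)$ also has zero mean, every cross term containing $w(k)$ vanishes in expectation and $\textbf{E}\{w^T(k)X(k+1)w(k)\}=\textbf{Tr}\{X(k+1)W\}$. This peels off exactly the trace term of $J_w$ and leaves $V_{k+1}=\textbf{E}\{(Ax(k)+Bu(k))^TX(k+1)(Ax(k)+Bu(k))\}+\textbf{Tr}\{X(k+1)W\}$. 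I would then add and subtract the stage cost $\textbf{E}\{x^T(k)Qx(k)+u^T(k)Ru(k)\}$ and complete the square in $u(k)$: using $H(k)=B^TX(k+1)B+R$ and $L(k)=H(k)^{-1}B^TX(k+1)A$, the combined quadratic form in $(x(k),u(k))$ separates into $\textbf{E}\{x^T(k)X(k)x(k)\}$ plus $\textbf{E}\{(u(k)-L(k)x(k))^TH(k)(u(k)-L(k)x(k))\}$, the coefficient of the $x^T(k)(\cdot)x(k)$ part being precisely $A^TX(k+1)A+Q-A^TX(k+1)BH(k)^{-1}B^TX(k+1)A=X(k)$ by \eqref{lqg}. (Invertibility of $H(k)$ follows from $R\succ 0$ in Assumption~$1$ together with $X(k+1)\succeq 0$, which holds by downward induction from $X(N)=Q_0\succeq 0$.) Rearranged, this is the per-stage identity $\textbf{E}\{x^T(k)Qx(k)+u^T(k)Ru(k)\}=(V_k-V_{k+1})+\textbf{Tr}\{X(k+1)W\}+\textbf{E}\{(u(k)-L(k)x(k))^TH(k)(u(k)-L(k)x(k))\}$.

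Finally, I would sum this identity over $k=0,\dots,N-1$: the $V_k-V_{k+1}$ terms telescope to $\textbf{E}\{x^T(0)X(0)x(0)\}-\textbf{E}\{x^T(N)X(N)x(N)\}$, and the term $-\textbf{E}\{x^T(N)X(N)x(N)\}$ cancels the terminal cost $\textbf{E}\{x^T(N)Q_0x(N)\}$ of \eqref{cost} because $X(N)=Q_0$. What remains is exactly $J=J_u+J_w$ with $J_w=\textbf{E}\{x^T(0)X(0)x(0)\}+\sum_{k=0}^{N-1}\textbf{Tr}\{X(k+1)W\}$ (written $x^T(0)X(0)x(0)+\dots$ when $x(0)$ is treated as given), and since $\{X(k)\}$ depends only on $(A,B,Q,R,Q_0)$ while $W$ and $P_0$ are fixed, $J_w$ contains no occurrence of $u$ — the last claim. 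The matrix algebra of completing the square is routine; the step I expect to need the most care is the stochastic bookkeeping above — invoking admissibility of the control to conclude $u(k)$ is independent of $w(k)$, so that the noise enters only through $\textbf{Tr}\{X(k+1)W\}$ — and it is there, rather than in the algebra, that the argument has to be pinned down precisely.
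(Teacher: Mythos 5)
Your completion-of-squares/telescoping argument is correct and is precisely the classical proof of this identity; the paper itself does not prove the proposition but cites it to \cite{Astrom:70}, where exactly this argument (substitute the dynamics into $\textbf{E}\{x^T(k+1)X(k+1)x(k+1)\}$, kill the cross terms using the zero mean and independence of $w(k)$, complete the square in $u(k)$, and telescope with $X(N)=Q_0$) is used. The only quibble is one you inherit from the paper rather than introduce: with $L(k)=H(k)^{-1}B^TX(k+1)A$ defined with a plus sign, the cross term $+2x^TA^TX(k+1)Bu$ literally completes to $(u(k)+L(k)x(k))^TH(k)(u(k)+L(k)x(k))$, so the paper's convention for $L$ is off by a minus sign throughout (consistently, so nothing downstream breaks), and your sketch faithfully reproduces that convention.
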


From Proposition \ref{dec}, it can be seen that minimizing $J$ is equivalent to minimizing $J_{u}$. Note that, under the {Assumption~\ref{costmatrix}}, $H(k)$ is positive definite.

\subsection{State Decomposition}
The first step towards finding the optimal controller is decomposing the state vector into independent terms.
\begin{lemma}\label{lemma1}\emph{ The state vector can be decomposed as
\begin{align*}
x(k)=&\underbrace{w(k-1)+\bigl(A+BF(k-1)\bigr)w(k-2)}_{x^1(k)}\\
&+\underbrace{\textbf{E}\left\{x(k)|x(0:k-2)\right\}}_{x^2(k)}
\end{align*}
where $x^1(k)$ and $x^2(k)$ are independent random variables.}
\end{lemma}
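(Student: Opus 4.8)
The plan is to unroll the closed-loop dynamics two steps and peel off the part that is measurable with respect to the common history $x(0:k-2)$. Using the dynamics \eqref{mainsystem} together with the established form \eqref{ucontrol} of the control, I would first write
\[
x(k)=\bigl(A+BF(k-1)\bigr)x(k-1)+BG(k-1)x(k-2)+Bf\bigl(x(0:k-3)\bigr)+w(k-1),
\]
and then substitute $x(k-1)=Ax(k-2)+Bu(k-2)+w(k-2)$. The crucial observation is that, by the information pattern \eqref{historyset}, each $u_i(k-2)$ is a function of data no newer than time $k-2$, so $u(k-2)$ --- and likewise $x(k-2)$, $x(k-3)$, and $f(x(0:k-3))$ --- is a deterministic linear function of $x(0:k-2)$. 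Collecting terms therefore yields
\[
x(k)=w(k-1)+\bigl(A+BF(k-1)\bigr)w(k-2)+\phi\bigl(x(0:k-2)\bigr)
\]
for a suitable linear map $\phi$.

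Next I would take the conditional expectation of both sides given $x(0:k-2)$. Since $w$ is zero-mean white noise and $x(0)$ is uncorrelated with $w(\cdot)$, the information $x(0:k-2)$ is generated by $\{x(0),w(0),\dots,w(k-3)\}$, so $w(k-1)$ and $w(k-2)$ are independent of $x(0:k-2)$ and have zero conditional mean. Hence $\textbf{E}\{x(k)\mid x(0:k-2)\}=\phi(x(0:k-2))$, which identifies this remainder as exactly $x^2(k)$, and consequently $x^1(k)=x(k)-x^2(k)=w(k-1)+(A+BF(k-1))w(k-2)$, as claimed.

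For independence, note that $x^1(k)$ is a measurable function of $\{w(k-1),w(k-2)\}$ while $x^2(k)=\phi(x(0:k-2))$ is a measurable function of $\{x(0),w(0),\dots,w(k-3)\}$; these two families are independent by the whiteness of $w$ and the independence of $x(0)$, so $x^1(k)$ and $x^2(k)$ are independent. Since all signals involved are jointly Gaussian, this is in fact equivalent to checking that their cross-covariance vanishes, which is immediate from the same observation.

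I expect the main obstacle to be the bookkeeping in the first step: one must unroll the recursion exactly twice --- far enough to expose $w(k-1)$ and $(A+BF(k-1))w(k-2)$, but not so far as to leave residual noise --- and carefully invoke the specific information constraint \eqref{historyset} and the sparsity of $F$ and $G$ to be sure that every leftover term is a function of the common history $x(0:k-2)$ only. Once that is secured, the conditioning step and the independence claim are routine.
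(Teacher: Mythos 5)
Your proposal is correct and follows essentially the same route as the paper: both use the dynamics, the controller structure \eqref{ucontrol}, and the fact that $u(k-2)$ and the delayed terms are deterministic functions of $x(0:k-2)$ to isolate $w(k-1)+(A+BF(k-1))w(k-2)$ as the estimation error; the paper merely organizes this as three successive conditional-expectation computations rather than one unrolled substitution. The only minor difference is that the paper establishes independence by invoking the Gaussian projection property (its Proposition~4b), whereas you argue directly that $x^1(k)$ and $x^2(k)$ depend on disjoint, mutually independent sets of noise terms --- an equally valid and slightly more elementary observation.
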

\begin{proof}
The term $x^2(k)$ is the conditional estimate of state $x(k)$ given the piece of information shared between all vehicles, and $x^1(k)$ is the estimation
error. The independence between $x(k)-x^2(k)$ and $x^2(k)$ can be established by Proposition $4$b in the appendix. To calculate $x^1(k)$, we proceed in three steps. First consider
$$x(k-1)=Ax(k-2)+Bu(k-2)+w(k-2)$$
Since $x(k-2)$ belongs to the sequence $x(0:k-2)$ and $u(k-2)$ is a deterministic function of $x(0:k-2)$, we have
\begin{equation}
x(k-1)-\textbf{E}\{x(k-1)|x(0:k-2)\}=w(k-2)\label{300}
\end{equation}
where we used the zero-mean and independence of $w(k-2)$ and $x(0:k-2)$. The structure of controller is given by equation~\eqref{ucontrol}, so
$u(k-1)$ can be written as
\begin{small}
$$
u(k-1)=F(k-1) x(k-1)+G(k-1){x(k-2)}+f\bigl({x}(0:k-3)\bigr)
$$
\end{small}
Since $G(k-1){x(k-2)}+f\bigl({x}(0:k-3)$ is a deterministic function of $x(0:k-2)$, we have
\begin{eqnarray}
&&\hspace{-1cm}u(k-1)-\textbf{E}\{u(k-1)|x(0:k-2)\}\nonumber\\
&=&F(k-1)\bigl(x(k-1)-\textbf{E}\{x(k-1)|x(0:k-2)\}\bigr)\nonumber\\
&=&F(k-1)w(k-2)\label{301}
\end{eqnarray}
where we substituted ~\eqref{300} into the second line. Finally, note that $w(k-1)$ and $x(0:k-2)$ are independent. Therefore,
\begin{align}
x(k)&-\textbf{E}\{x(k)|x(0:k-2)\}\notag\\
=&w(k-1)+A\left(x(k-1)-\textbf{E}\{{x}(k-1)|x(0:k-2)\}\right)\nonumber\\
&\hspace{1.35cm}+B\left(u(k-1)-\textbf{E}\{{u}(k-1)|x(0:k-2)\}\right)\nonumber\\
=&w(k-1)+\bigl(A+BF(k-1)\bigr)w(k-2)\label{32}
\end{align}
where we substituted ~\eqref{300} into the second line and~\eqref{301} into the third line. Thus the result follows.
\end{proof}
\subsection{Controller Decomposition}
Now that the state has been decomposed into two independent terms, the control input $u(k)$ can be decomposed in a similar fashion.
\begin{lemma}\label{lemma2} \emph{The control input $u(k)$ can be decomposed as
\begin{align*}
u(k)&=\underbrace{F(k)w(k-1)+M(k)w(k-2)}_{u^1(k)}+u^2(k)
\end{align*}
where $u^1(k)$ and $u^2(k)$ are independent, $u^2(k)$ is a linear function of $x(0:k-2)$, and
$$M(k)=F(k)\left(A+BF(k-1)\right)+G(k).$$}
\end{lemma}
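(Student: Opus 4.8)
The plan is to obtain the decomposition by substituting the state decomposition of Lemma~\ref{lemma1} into the known structural form of the controller. Starting from~\eqref{ucontrol}, $u(k)=F(k)x(k)+G(k)x(k-1)+f\bigl(x(0:k-2)\bigr)$, I would first condition on the common information $x(0:k-2)$. Since $f\bigl(x(0:k-2)\bigr)$ is a deterministic function of $x(0:k-2)$, this gives
\begin{align*}
\textbf{E}\{u(k)|x(0:k-2)\}=&F(k)\,\textbf{E}\{x(k)|x(0:k-2)\}\\
&+G(k)\,\textbf{E}\{x(k-1)|x(0:k-2)\}+f\bigl(x(0:k-2)\bigr),
\end{align*}
so the natural choice is $u^2(k)=\textbf{E}\{u(k)|x(0:k-2)\}$ and $u^1(k)=u(k)-u^2(k)$.

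Subtracting the last display from~\eqref{ucontrol} leaves
\begin{align*}
u^1(k)=&F(k)\bigl(x(k)-\textbf{E}\{x(k)|x(0:k-2)\}\bigr)\\
&+G(k)\bigl(x(k-1)-\textbf{E}\{x(k-1)|x(0:k-2)\}\bigr).
\end{align*}
By Lemma~\ref{lemma1} the first bracket equals $x^1(k)=w(k-1)+\bigl(A+BF(k-1)\bigr)w(k-2)$, and by~\eqref{300} the second bracket equals $w(k-2)$. Collecting the coefficients of $w(k-1)$ and $w(k-2)$ then yields $u^1(k)=F(k)w(k-1)+\bigl(F(k)(A+BF(k-1))+G(k)\bigr)w(k-2)$, which is exactly $F(k)w(k-1)+M(k)w(k-2)$ with $M(k)=F(k)\bigl(A+BF(k-1)\bigr)+G(k)$ as asserted.

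It then remains to check the two side properties. For linearity, note that $x(0)$ and $w(\cdot)$ are Gaussian and zero-mean and the controller is linear (by~\cite{Chi:72}), so all closed-loop states and inputs are jointly Gaussian and zero-mean; hence $\textbf{E}\{x(k)|x(0:k-2)\}$ and $\textbf{E}\{x(k-1)|x(0:k-2)\}$ are linear in $x(0:k-2)$, and since $f\bigl(x(0:k-2)\bigr)$ is also linear, $u^2(k)$ is a linear function of $x(0:k-2)$. For independence, I would argue exactly as in the proof of Lemma~\ref{lemma1}: $u^1(k)$ is the estimation error of $u(k)$ given the common information while $u^2(k)$ is its estimate, so Proposition~$4$b in the appendix gives independence; equivalently, $u^1(k)$ is a function only of $w(k-1)$ and $w(k-2)$, which are independent of $x(0:k-2)$ (a function of $x(0),w(0),\dots,w(k-3)$) and therefore of the $x(0:k-2)$-measurable quantity $u^2(k)$.

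I do not anticipate a real obstacle: once Lemma~\ref{lemma1} and identity~\eqref{300} are in hand, the lemma is essentially a regrouping of terms. The only points that need a little care are recording that conditional expectations of jointly Gaussian zero-mean vectors are linear---this is where the Gaussianity of $w(k)$ and $x(0)$ together with the linearity of the controller enter---and reading off $M(k)$ with the correct signs, so that it matches the coefficient appearing later in the controller realization of Theorem~$1$.
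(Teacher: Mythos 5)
Your proposal is correct and follows essentially the same route as the paper: define $u^2(k)=\textbf{E}\{u(k)|x(0:k-2)\}$, subtract it from the controller structure \eqref{ucontrol}, and substitute \eqref{32} and \eqref{300} to read off $M(k)=F(k)\bigl(A+BF(k-1)\bigr)+G(k)$. The extra remarks on Gaussianity/linearity and on independence via Proposition~4b are details the paper leaves implicit, but the argument is the same.
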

\begin{proof}
Let $u^2(k)=\textbf{E}\{u(k)|x(0:k-2)\}$, then $u^2(k)$ is a linear function of $x(0:k-2)$ and independent of $u(k)-u^2(k)$. Note that $f(x(0:k-2))$ is a linear function of $x(0:k-2)$, so $u^1(k)$ is computed as
\begin{align*}
u^1(k)=&u(k)-\textbf{E}\{u(k)|x(0:k-2)\}\notag\\
=&F(k)\bigl(x(k)-\textbf{E}\{x(k)|x(0:k-2)\}\bigr)\nonumber\\
&+G(k)\bigl(x(k-1)-\textbf{E}\{x(k-1)|x(0:k-2)\}\bigr)\nonumber\\
=&F(k)(w(k-1)+\left(A+BF(k-1)\right)w(k-2))\\
&+G(k)w(k-2)
\end{align*}
where we used equation~\eqref{ucontrol} in the first line, ~\eqref{32} in the second line and~\eqref{300} in the third line. The proof is completed by defining $M(k)=F(k)\left(A+BF(k-1)\right)+G(k)$.
\end{proof}
\begin{remark}
Since $B$ and $F$ are diagonal matrices, $G(k)$ and $F(k)A$ have the same sparsity structures. Therefore, sparsity structure of $M(k)$ and $G(k)$ are also the same.
\end{remark}

From Lemmas \ref{lemma1} and \ref{lemma2}, $x^2(k)$ and $u^2(k)$ are linear functions of $x(0:k-2)$ which is independent of $x^1(k)$ and $u^1(k)$. As a result the cost function $J_{u}$ can be decomposed as:
\begin{small}
\begin{align*}
J_{u}&=\underbrace{\sum_{k=0}^{N-1}\textbf{E}\left\{\bigl(u^1(k)-L(k)x^1(k)\bigr)^T H(k)\bigl(u^1(k)-L(k)x^1(k)\bigr)\right\}}_{J^1_{u}}\nonumber\\
&+\underbrace{\sum_{k=0}^{N-1}\textbf{E}\left\{\bigl(u^2(k)-L(k)x^2(k)\bigr)^T H(k)\bigl(u^2(k)-L(k)x^2(k)\bigr)\right\}}_{J^2_{u}}\nonumber\\
\end{align*}
\end{small}
The advantage of this decomposition of $J_{u}$ is that we now have two subproblems on the form
\begin{align}
\begin{split}
\min_{u^1}~&J^1_{u}(x^1,u^1)\\
\mbox{subject to}~&u^1(k)=F(k) w(k-1)+M(k) w(k-2)
\end{split}
\label{decomopt1}
\end{align}
\begin{align}
&\hspace{-23mm}\min_{u^2}~J^2_{u}(x^2,u^2)\nonumber\\
&\hspace{-30mm}\mbox{subject to}~u^2(k)=f\bigl(x(0:k-2)\bigr)
\label{decomopt2}
\end{align}
\subsection{Finite Horizon Controller Derivation}
First consider minimization problem (\ref{decomopt2}). Before proceeding, let us state the following proposition which allows us to find the optimal control $u^2(k)$.
\begin{proposition}[\cite {Astrom:70}]\emph{
Consider the discrete time linear system
$$
x(k+1)=Ax(k)+B u(k)+w(k)
$$
where $w(k)$ is a zero mean Gaussian white noise. Assume that $u(k)=\mu\bigl(x(0:k)\bigr)$. Then the optimal control which minimizes the cost function $J_u$, is given by
$$
u(k)=L(k) x(k)
$$}
\label{clqg}
\end{proposition}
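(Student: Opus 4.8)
The plan is to reduce the claim to Proposition~\ref{dec}. Applied to the system $x(k+1)=Ax(k)+Bu(k)+w(k)$ with the matrices $X(k),H(k),L(k)$ of~\eqref{lqg} and terminal condition $X(N)=Q_0$, that proposition rewrites the cost~\eqref{cost} as $J=J_u+J_w$, where $J_w$ does not depend on the control and
\[
J_u=\sum_{k=0}^{N-1}\textbf{E}\left\{\bigl(u(k)-L(k)x(k)\bigr)^TH(k)\bigl(u(k)-L(k)x(k)\bigr)\right\}.
\]
A point I would check carefully up front is that this identity is available for \emph{every} admissible policy and not just for the candidate optimum: its derivation only uses that $w(k)$ is zero mean and independent of $(x(k),u(k))$, and for any $u(k)=\mu\bigl(x(0:k)\bigr)$ this holds because the white noise $w(k)$ is independent of $x(0:k)$, hence of $u(k)$. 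Thus minimizing $J$ over the admissible class is equivalent to minimizing $J_u$ over that class.

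Next I would record the positive definiteness of the per-stage weights. By Assumption~\ref{costmatrix}, $R\succ0$; and $X(k+1)\succeq0$, since $x^TX(k)x$ equals the optimal cost-to-go at time $k$ of the deterministic ($w\equiv0$) LQR problem with weights $Q,Q_0\succeq0$ and $R\succ0$, which is nonnegative (a short induction from $X(N)=Q_0\succeq0$ using the standard completion of the square). Hence $H(k)=B^TX(k+1)B+R\succeq R\succ0$, which is the positive definiteness already noted in the text following Proposition~\ref{dec}. Consequently each summand of $J_u$ is the expectation of a nonnegative random variable, so $J_u\ge0$ for every admissible policy.

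It then remains to produce an admissible policy attaining $J_u=0$, and $u(k)=L(k)x(k)$ is the obvious candidate: it is admissible because $x(k)$ is a linear (hence measurable) function of $x(0:k)$, and under it $u(k)-L(k)x(k)=0$ identically, so every term of $J_u$ vanishes and $J_u=0$. This attains the lower bound, so $u(k)=L(k)x(k)$ minimizes $J_u$, and therefore $J$, which is the assertion. If uniqueness is desired, positive definiteness of each $H(k)$ forces any minimizer to satisfy $u(k)=L(k)x(k)$ almost surely.

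There is no substantial obstacle once Proposition~\ref{dec} is granted; the two points needing attention are (i) verifying that the hypotheses of Proposition~\ref{dec} hold for the whole admissible class, so that the reduction to $\min J_u$ is legitimate, and (ii) confirming $H(k)\succ0$, so that the per-stage quadratic forms are genuinely nonnegative and $u(k)=L(k)x(k)$ is the global minimizer rather than merely a critical point. The rest is immediate.
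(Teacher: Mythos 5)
Your proof is correct. The paper does not actually prove this proposition --- it is quoted as a classical result from \cite{Astrom:70} --- and your argument is precisely the standard completion-of-squares reasoning that underlies it: the decomposition $J=J_u+J_w$ of Proposition~\ref{dec} is valid for every causal policy $u(k)=\mu\bigl(x(0:k)\bigr)$ because white noise $w(k)$ is independent of $x(0:k)$, each summand of $J_u$ is nonnegative since $H(k)\succ0$ (which requires $X(k+1)\succeq0$, as you verify), and $u(k)=L(k)x(k)$ is admissible and drives $J_u$ to its lower bound of zero. Your two flagged checkpoints are exactly the right ones, and nothing further is needed.
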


The mapping from $x^2(k)$ to $u^2(k)$ is given in the following lemma.
\begin{lemma}\label{lemma3}\emph{
The dynamics of $x^2$ can be written as
\begin{equation}
x^2(k+1)=Ax^2(k)+Bu^2(k)+T(k)w(k-2)
\label{510}
\end{equation}
where $T(k)=A(A+BF(k-1))+BM(k)$.}
\end{lemma}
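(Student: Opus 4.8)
The plan is to start from the definition $x^2(k+1)=\textbf{E}\{x(k+1)\mid x(0:k-1)\}$, noting that the conditioning horizon advances by one step when $k$ does, substitute the plant dynamics $x(k+1)=Ax(k)+Bu(k)+w(k)$, and take the conditional expectation term by term. Since $w(k)$ is zero-mean and independent of $x(0:k-1)$, the term $\textbf{E}\{w(k)\mid x(0:k-1)\}$ vanishes, leaving
$x^2(k+1)=A\,\textbf{E}\{x(k)\mid x(0:k-1)\}+B\,\textbf{E}\{u(k)\mid x(0:k-1)\}$.

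Next I would evaluate the two remaining conditional expectations using the decompositions from Lemmas~\ref{lemma1} and~\ref{lemma2}. Writing $x(k)=x^2(k)+w(k-1)+(A+BF(k-1))w(k-2)$, the term $x^2(k)$ is a deterministic function of $x(0:k-2)\subseteq x(0:k-1)$, so it passes through the conditional expectation unchanged. The key observation is that $w(k-2)$ is recoverable from $x(0:k-1)$ — indeed $w(k-2)=x(k-1)-Ax(k-2)-Bu(k-2)$ with $u(k-2)$ a deterministic function of $x(0:k-2)$, exactly the argument behind~\eqref{300} — whereas $w(k-1)$ is still independent of $x(0:k-1)$ and zero-mean. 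Hence $\textbf{E}\{x(k)\mid x(0:k-1)\}=x^2(k)+(A+BF(k-1))w(k-2)$. Applying the same reasoning to $u(k)=u^2(k)+F(k)w(k-1)+M(k)w(k-2)$ gives $\textbf{E}\{u(k)\mid x(0:k-1)\}=u^2(k)+M(k)w(k-2)$.

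Substituting these back then yields $x^2(k+1)=Ax^2(k)+Bu^2(k)+\bigl(A(A+BF(k-1))+BM(k)\bigr)w(k-2)$, which is precisely~\eqref{510} with $T(k)=A(A+BF(k-1))+BM(k)$, completing the proof.

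I expect the only real subtlety to be the bookkeeping of the conditioning horizons: $x^2(k)$ and $u^2(k)$ condition on $x(0:k-2)$, but the left-hand side $x^2(k+1)$ conditions on $x(0:k-1)$, so one must carefully track which noise increments have become measurable by time $k-1$ ($w(k-2)$) and which have not ($w(k-1)$, $w(k)$). Once that distinction is pinned down, the remainder is a direct substitution of Lemmas~\ref{lemma1}--\ref{lemma2} followed by collecting terms; no Riccati or optimality argument enters at this stage.
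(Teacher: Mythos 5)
Your proof is correct, and it arrives at $T(k)=A(A+BF(k-1))+BM(k)$ by a valid route, but it is not quite the paper's route. The paper avoids conditional expectations entirely at this stage: it writes $x^2(k+1)=x(k+1)-x^1(k+1)$, substitutes the plant dynamics together with the explicit formula $x^1(k+1)=w(k)+\bigl(A+BF(k)\bigr)w(k-1)$ from Lemma~\ref{lemma1}, and then expands $x(k)=x^1(k)+x^2(k)$ and $u(k)=u^1(k)+u^2(k)$; the $w(k)$ and $w(k-1)$ terms cancel algebraically and only the $w(k-2)$ terms survive. You instead work from the definition $x^2(k+1)=\textbf{E}\{x(k+1)\mid x(0:k-1)\}$ and push the conditional expectation through the dynamics, which obliges you to justify two measurability facts: that $w(k-2)$ is a deterministic function of $x(0:k-1)$ (via the argument behind~\eqref{300}) and that $w(k-1)$ is independent of $x(0:k-1)$. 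Both facts are true and your bookkeeping of the conditioning horizons is handled correctly, so the argument goes through; the paper's version simply sidesteps this probabilistic reasoning because Lemma~\ref{lemma1} has already packaged it into the closed-form expression for $x^1$. The two computations are dual — the terms your conditional expectation annihilates are exactly the terms the paper subtracts off as $x^1(k+1)$ — so neither buys anything substantive over the other; the paper's is marginally shorter, while yours makes the estimation-theoretic meaning of $x^2$ and of $T(k)w(k-2)$ (the newly measurable innovation) more transparent.
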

\begin{proof}
See appendix.
\end{proof}

The following theorem shows that $u^2(k)$ is exactly the optimal controller for centralized information structure with two step delay, where
the information set of each vehicle is $\mathcal{I}_{i}(k)=\left\{x(0:k-2)\right\}$.
\begin{theorem}\emph{
Given that Assumption $1$ holds, an optimal solution to (\ref{decomopt2}) is given by
\begin{align}
u^2(k)=L(k)x^2(k)
\label{500}
\end{align}}
\end{theorem}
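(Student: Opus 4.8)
The plan is to read the result off from Proposition \ref{clqg} and Lemma \ref{lemma3}, with the nonnegativity of $J^2_u$ doing the rest. First I would note that, by Assumption \ref{costmatrix} and the remark following Proposition \ref{dec}, $H(k)$ is positive definite for every $k$, so every summand of
$$J^2_u=\sum_{k=0}^{N-1}\textbf{E}\bigl\{(u^2(k)-L(k)x^2(k))^TH(k)(u^2(k)-L(k)x^2(k))\bigr\}$$
is nonnegative; hence $J^2_u\ge 0$ for every admissible $u^2$, with equality precisely when $u^2(k)=L(k)x^2(k)$ almost surely for all $k$. The theorem therefore reduces to showing that the choice $u^2(k)=L(k)x^2(k)$ is admissible for problem \eqref{decomopt2}, i.e. that it is a linear function of $x(0:k-2)$; it then attains the bound $J^2_u=0$ and is optimal.

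Admissibility is almost immediate: by definition $x^2(k)=\textbf{E}\{x(k)\mid x(0:k-2)\}$ is a (deterministic) function of $x(0:k-2)$, and since $(x,w)$ is jointly Gaussian with zero mean this conditional expectation is in fact linear in $x(0:k-2)$, so $L(k)x^2(k)$ is an admissible control for \eqref{decomopt2}. To see that this choice is self-consistent and to identify it as the optimum I would invoke Lemma \ref{lemma3}: substituting $u^2(k)=L(k)x^2(k)$ into \eqref{510} yields the well-posed closed-loop recursion $x^2(k+1)=(A+BL(k))x^2(k)+T(k)w(k-2)$, started from the deterministic value $x^2(0)=\textbf{E}\{x(0)\}=0$; the driving term $\tilde w(k):=T(k)w(k-2)$ is a zero-mean Gaussian white-noise sequence (the disturbances $w(k-2)$ are mutually independent over $k$), and by \eqref{300} each $\tilde w(j)$ with $j\ge k$ is independent of $x(0:k-2)$. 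This is exactly the setting of Proposition \ref{clqg} applied to the $x^2$-system: over controls that are functions of $x^2(0:k)$ the minimizer is $u^2(k)=L(k)x^2(k)$ and it gives $J^2_u=0$. Since $x^2(0:k)$ is a deterministic function of $x(0:k-2)$, this control is also admissible for \eqref{decomopt2}, and because $J^2_u\ge 0$ holds over that larger admissible class as well, $u^2(k)=L(k)x^2(k)$ is optimal there too.

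The step I expect to need the most care is the information-structure bookkeeping linking \eqref{decomopt2} to Proposition \ref{clqg}. The admissible controls of \eqref{decomopt2} are functions of $x(0:k-2)$, which is strictly richer than the state history $x^2(0:k)$ that Proposition \ref{clqg} presumes, and $x^2$ is itself defined through the not-yet-fixed controller, so one cannot quote the proposition verbatim. The two facts that close the gap are: (i) $x^2(0:k)$ is a deterministic function of $x(0:k-2)$, so the candidate $u^2(k)=L(k)x^2(k)$ lies in the admissible class of \eqref{decomopt2} and the induced recursion for $x^2$ is well posed; and (ii) the information in $x(0:k-2)$ beyond what $x^2(0:k)$ already provides is independent of the future process noise $\{\tilde w(j)\}_{j\ge k}$, so it cannot be used to drive $J^2_u$ below $0$ — which is in any case impossible since $J^2_u\ge 0$ on the whole admissible class. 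With (i) and (ii) settled, $u^2(k)=L(k)x^2(k)$ attains $J^2_u=0$ and the theorem follows.
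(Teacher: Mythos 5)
Your proof is correct and follows essentially the same route as the paper: reduce the subproblem to a centralized LQR via the $x^2$-recursion of Lemma \ref{lemma3}, note that the driving noise $T(k)w(k-2)$ is independent of the admissible information $x(0:k-2)$, and conclude $u^2(k)=L(k)x^2(k)$ via Proposition \ref{clqg}. Your additional observation that $J^2_u\ge 0$ termwise (since $H(k)\succ 0$) and is driven to zero by an admissible candidate, together with the explicit inductive check that $L(k)x^2(k)$ is a linear function of $x(0:k-2)$, simply makes explicit the admissibility bookkeeping that the paper's three-line proof leaves implicit.
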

\begin{proof}
Consider the system (\ref{510}) together with the cost function $J^2_{u}$. Both $x(0:k-2)$ and $u^2(k)$ are linear functions of $x(0:k-2)$ which is independent of $w(k-2)$. Hence, finding the optimal control $u^2(k)$ is now a centralized LQR problem. Applying proposition \ref{clqg}, we obtain (\ref{500}).
\end{proof}

We now turn to the optimization problem (\ref{decomopt1}). Recalling the expansions of $x^1(k)$ and $u^1(k)$ in terms of
$w(k-1)$ and $w(k-2)$, the expected value of the $k^{th}$ term of $J^1_{u}$ can be expanded as follows:
\begin{small}
\begin{align*}
\textbf{E}\{&\bigl(u^1(k)-L(k)x^1(k)\bigr)^T H(k)\bigl(u^1(k)-L(k)x^1(k)\bigr)\}\\
&=\textbf{Tr}\{H(k)(F(k)-L(k))W(F(k)-L(k))^T\}\nonumber\\
&\hspace{3mm}+\textbf{Tr}\{H(k)\bigl(M(k)-L(k)(A+BF(k-1))\bigr)W\\
&\hspace{2.4cm}\times \bigl(M(k)-L(k)(A+BF(k-1))\bigr)^T\}
\end{align*}
\end{small}
where we used Proposition $4$a in the appendix and the fact that $w(k-1)$ and $w(k-2)$ are independent. To minimize $J^1_{u}$ with respect to $F(0),\ldots,F(k)$ and $M(1),\ldots,M(k)$, the difficulty is that $F$ and $M$ must satisfy specified sparsity constraints. We use vectorization of matrices to simplify our optimization problem.

Let us define the matrix $D(k)$ as follows
\begin{small}
\begin{align*}
D(k)\triangleq\begin{bmatrix} F(k-1) & M(k)\end{bmatrix} \;\;\in\mathbb{R}^{m\times 2p},\;k=1,\ldots,N-1
\end{align*}
\end{small}
and $D(N)\triangleq F(N-1)$. Then $\textrm{vec}\bigl(D(k)\bigr)$ is given by
$$\begin{bmatrix}\textrm{vec}\bigl(F(k-1)\bigr) \\ \textrm{vec}\bigl(M(k)\bigr)\end{bmatrix}\;\;\; \in\mathbb{R}^{2mp},\;k=1,\ldots,N-1$$
and $\textrm{vec}(D(N)\bigr)=\textrm{vec}\bigl(F(N-1)\bigr)$.
Because of the specified sparsity of $F$ and $M$, some components of $\textrm{vec}\bigl(D(k)\bigr)$ must be zero. Let $S$ be the index set of non-zero elements of
$\textrm{vec}\bigl(D(k)\bigr)$, i.e.
$$
S\triangleq\left\{i:\;\textrm{vec}_i\bigl(D(k)\bigr) \neq 0\right\}
$$
Note that $\textrm{vec}\bigl(D(k)\bigr)$ and $\textrm{vec}^{\star}\bigl(D(k)\bigr)$ are related by nonsquare matrix. We define this matrix to be $E$, where dimensions implied by the context, so that $\textrm{vec}\bigl(D(k)\bigr)=E\textrm{vec}^{\star}\bigl(D(k)\bigr)$. The columns of $E$ are ${e_j}$ for ${j\in S}$ where $e_j$ denotes a column vector having all zeros except a $1$ at the $j^{th}$ position. Since exactly one entry in each column of $E$ is equal to $1$, $E^T X E$ is a
sub-matrix of $X$ containing exactly those rows and columns corresponding to the set $S$.
We illustrate the above definition via an example. Let $D=\text{diag}(d_{11},d_{22},d_{33})\in\mathbb{R}^{3\times 3}$. For this matrix, $S=\{1,5,9\}$, $E=\begin{bmatrix} e_1& e_5 & e_9\end{bmatrix}$, and
$\textrm{vec}^{\star}(D)=[d_{11},d_{22},d_{33}]^T$.

In the following lemma, we show that a vectorization of matrices $F$ and $M$ makes the cost function $J^1_{u}$ a
sum of quadratic functions without constraints.

\begin{lemma}\emph{
Define
\begin{small}
\begin{align*}
Y_{11}(k)&=W\otimes (H(k-1)+B^TL^T(k)H(k)L(k)B)\\
Y_{12}(k)&=-W\otimes B^TL^T(k)H(k)\\
Y_{22}(k)&=W\otimes H(k)
\end{align*}
\end{small}
and let}
\begin{small}
\begin{align*}
Y_k&=\begin{bmatrix} Y_{11}(k) & Y_{12}(k)\\ Y_{12}^T(k) & Y_{22}(k)\end{bmatrix} \\
b_k&=\begin{bmatrix} Y_{22}(k-1) \\ 0 \end{bmatrix}\textrm{vec}\left(L(k-1)\right)+\begin{bmatrix} Y_{12}(k) \\ Y_{22}(k) \end{bmatrix}\textrm{vec}\left(L(k)A\right)
\end{align*}
\end{small}
\emph{for $k=1,\ldots,N-1$, and}
\begin{small}
\begin{align*}
Y_N&=W \otimes H(N-1)\\
b_N&=\bigl(W \otimes H(N-1)\bigr)\textrm{vec}\left(L(N-1)\right)
\end{align*}
\end{small}
\emph{Then optimization problem (\ref{decomopt1}) is equivalent to}
\begin{eqnarray}
\min_{\textrm{vec}^{\star}(D(k))} \sum_{k=1}^{N} & \frac{1}{2}\textrm{vec}^{\star}(D(k))^T [Y_k]_{SS} \textrm{vec}^{\star}(D(k))\label{800}\\
&-\textrm{vec}^{\star}(D(k))^T [b_k]_S\nonumber
\end{eqnarray}
\emph{Moreover, $Y_k$ is positive definite.}
\end{lemma}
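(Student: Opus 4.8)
The plan is to pass from traces to vector quadratic forms, regroup the sum so that each term depends on a single block $D(j)$, read off $Y_k$ and $b_k$ by matching coefficients, and finally compute a Schur complement for positivity. The starting point is the expansion of the $k$-th summand of $J^1_{u}$ displayed just before the lemma, i.e. the sum of $\textbf{Tr}\{H(k)(F(k)-L(k))W(F(k)-L(k))^T\}$ and $\textbf{Tr}\{H(k)(M(k)-L(k)(A+BF(k-1)))W(\cdots)^T\}$; I will call these the ``$F$-part at stage $k$'' and the ``$M$-part at stage $k$''. The key observation is that the $F$-part at stage $k$ and the $M$-part at stage $k+1$ are the only two summands touching the gains $F(k)$ and $M(k+1)$, i.e. exactly the block $D(k+1)=\begin{bmatrix}F(k)&M(k+1)\end{bmatrix}$. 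Re-indexing accordingly, $J^1_{u}=\sum_{j=1}^{N}(\text{stage-}j\text{ cost})$, where for $1\le j\le N-1$ the stage-$j$ cost is (the $F$-part at stage $j-1$) plus (the $M$-part at stage $j$), and for $j=N$ it is just the $F$-part at stage $N-1$ — which immediately gives $Y_N=W\otimes H(N-1)$ and $b_N=(W\otimes H(N-1))\textrm{vec}(L(N-1))$.

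Next I would vectorize. Using $\textbf{Tr}(HZWZ^T)=\textrm{vec}(Z)^T(W\otimes H)\textrm{vec}(Z)$ (valid since $W=W^T$) together with $\textrm{vec}(L(j)BF(j-1))=(I\otimes L(j)B)\textrm{vec}(F(j-1))$, the stage-$j$ cost becomes a quadratic-plus-linear function of $\textrm{vec}(D(j))=\begin{bmatrix}\textrm{vec}(F(j-1))^T&\textrm{vec}(M(j))^T\end{bmatrix}^T$. Collecting second-order coefficients: the $\textrm{vec}(F(j-1))$ block picks up $W\otimes H(j-1)$ from the $F$-part and $(I\otimes L(j)B)^T(W\otimes H(j))(I\otimes L(j)B)=W\otimes B^TL^T(j)H(j)L(j)B$ from the $M$-part, summing to $Y_{11}(j)$; the cross block is $-W\otimes B^TL^T(j)H(j)=Y_{12}(j)$; the $\textrm{vec}(M(j))$ block is $W\otimes H(j)=Y_{22}(j)$. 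The linear terms assemble in the same way into $-2\,\textrm{vec}(D(j))^Tb_j$ with $b_j$ as stated. Hence the stage-$j$ cost equals $\textrm{vec}(D(j))^TY_j\textrm{vec}(D(j))-2\,\textrm{vec}(D(j))^Tb_j$ up to an additive constant independent of the gains. Imposing the sparsity of $F$ and $M$ amounts to substituting $\textrm{vec}(D(j))=E\,\textrm{vec}^{\star}(D(j))$; since $E^TY_jE=[Y_j]_{SS}$ and $E^Tb_j=[b_j]_{S}$, and after dropping the irrelevant overall factor $2$ and the constants, the minimization of $J^1_{u}$ subject to \eqref{decomopt1} reduces to \eqref{800}, which decouples over $j$ because each stage cost involves a single block $D(j)$.

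For positive definiteness of $Y_k$: $W\succ0$ by standing assumption and $H(k)\succ0$ under Assumption~1, so $Y_{22}(k)=W\otimes H(k)\succ0$, and it suffices to check that the Schur complement $Y_{11}(k)-Y_{12}(k)Y_{22}(k)^{-1}Y_{12}(k)^T$ is positive definite. Using $Y_{22}(k)^{-1}=W^{-1}\otimes H(k)^{-1}$ and the mixed-product rule $(A\otimes B)(C\otimes D)=AC\otimes BD$, one gets $Y_{12}(k)Y_{22}(k)^{-1}Y_{12}(k)^T=W\otimes B^TL^T(k)H(k)L(k)B$, so the Schur complement collapses to $W\otimes H(k-1)$, which is positive definite since $H(k-1)=B^TX(k)B+R\succ0$ under Assumption~1 (exactly as noted for $H(k)$ after Proposition~1). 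Therefore $Y_k\succ0$, and in particular $[Y_k]_{SS}$, a principal submatrix, is positive definite as well.

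I expect the main obstacle to be the index bookkeeping in the re-grouping step: getting the pairing ``$F$-part of stage $k$ with $M$-part of stage $k+1$'' exactly right, separating off the terminal term $Y_N,b_N$, and handling the first stage, where the relevant covariance is that of $x(0)$ rather than $W$ (so strictly $Y_{22}(1)$ should read $P_0\otimes H(1)$; the stated form corresponds to taking $P_0=W$, and this boundary subtlety anyway disappears in the steady-state controller of Theorem~1). The vectorization identities, the Kronecker-product algebra, and the Schur-complement computation are all routine once the groupings are fixed.
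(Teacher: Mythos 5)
Your proof is correct and follows essentially the same route as the paper's appendix proof of Lemma 4: vectorize the two trace terms via $\textbf{Tr}\{AXBX^T\}=\textrm{vec}^T(X)(B^T\otimes A)\textrm{vec}(X)$ and the mixed-product rule, pair the $F$-part at stage $k-1$ with the $M$-part at stage $k$ to obtain a quadratic in $\textrm{vec}(D(k))$, restrict to the nonzero entries through $\textrm{vec}(D)=E\,\textrm{vec}^{\star}(D)$ so that $[Y_k]_{SS}=E^TY_kE$ and $[b_k]_S=E^Tb_k$, and prove positive definiteness from the Schur complement $Y_{11}(k)-Y_{12}(k)Y_{22}^{-1}(k)Y_{12}^T(k)=W\otimes H(k-1)$. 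Your boundary remark about the first stage (where the relevant covariance is $P_0$ rather than $W$) flags a subtlety the paper also passes over silently, and, as you note, it does not affect the steady-state controller.
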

\begin{proof} See appendix.
\end{proof}

The advantage of this equivalent reformulation of the problem is that we have $N$ quadratic functions without constraints and thus the optimal
controller gains can be computed by simply minimizing these functions separately.
\begin{theorem}\emph{
Suppose $W$ is positive definite and Assumption $1$ holds. Then the optimal gains of controllers are given by:}
\begin{small}
\begin{align*}
\textrm{vec}^{\star}(F(k-1))&=\begin{bmatrix} I & 0\end{bmatrix} \textrm{vec}^{\star}(D(k))\\
\textrm{vec}^{\star}(M(k))&=\begin{bmatrix} 0 & I\end{bmatrix} \textrm{vec}^{\star}(D(k))
\end{align*}
\end{small}
\emph{for $k=1,\ldots,N-1$ and }$\textrm{vec}^{\star}(F(N-1))=\textrm{vec}^{\star}(D(N))$, \emph{where}
$\textrm{vec}^{\star}(D(k))=[Y_k]_{SS}^{-1}[b_k]_S$.
\end{theorem}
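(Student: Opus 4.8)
The plan is to reduce the problem to the equivalent unconstrained reformulation established in the preceding lemma and then solve it term by term. By that lemma, minimizing $J^1_{u}$ over all $F$ and $M$ with the prescribed sparsity is the same as minimizing
\begin{align*}
\sum_{k=1}^{N}\left(\frac{1}{2}\textrm{vec}^{\star}(D(k))^T [Y_k]_{SS}\,\textrm{vec}^{\star}(D(k)) - \textrm{vec}^{\star}(D(k))^T [b_k]_S\right)
\end{align*}
over the free vectors $\textrm{vec}^{\star}(D(k))$, $k=1,\dots,N$. The first point to verify is that this objective genuinely decouples across $k$: each gain $F(k-1)$ (for $k=1,\dots,N$) and each $M(k)$ (for $k=1,\dots,N-1$) occurs in exactly one block $D(k)=\begin{bmatrix} F(k-1) & M(k)\end{bmatrix}$, the reindexing already carried out in the lemma having collected all contributions of $J^1_{u}$ involving $F(k-1)$ and $M(k)$ into the single variable $\textrm{vec}^{\star}(D(k))$. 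Consequently the sum is a sum of $N$ independent quadratic forms, and it suffices to minimize each summand on its own.

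Fix $k$. The $k$-th summand is a quadratic in $\textrm{vec}^{\star}(D(k))$ whose Hessian is $[Y_k]_{SS}$. Since $W$ is positive definite and Assumption~1 holds, the preceding lemma guarantees that $Y_k$ is positive definite; and $[Y_k]_{SS}=E^T Y_k E$ is a principal submatrix of $Y_k$ (here $E$ is the selection matrix with columns $\{e_j : j\in S\}$), hence itself positive definite and in particular invertible. The summand is therefore strictly convex, so it has a unique minimizer, characterized by the first-order condition $[Y_k]_{SS}\,\textrm{vec}^{\star}(D(k)) = [b_k]_S$, i.e.\ $\textrm{vec}^{\star}(D(k)) = [Y_k]_{SS}^{-1}[b_k]_S$. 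The terminal index $k=N$ is treated identically with $Y_N = W\otimes H(N-1)$ and $b_N=(W\otimes H(N-1))\,\textrm{vec}(L(N-1))$, which is again positive definite under the same hypotheses.

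Finally I would read off the individual controller gains from $D(k)$. Because $D(k)$ splits columnwise into the $F$-block and the $M$-block, and because $F$ and $M$ carry disjoint, individually fixed sparsity patterns, the vector $\textrm{vec}^{\star}(D(k))$ partitions conformably as $\textrm{vec}^{\star}(F(k-1))$ stacked on top of $\textrm{vec}^{\star}(M(k))$; multiplying by $\begin{bmatrix} I & 0\end{bmatrix}$ and $\begin{bmatrix} 0 & I\end{bmatrix}$ extracts the two pieces, giving the stated formulas, with $\textrm{vec}^{\star}(F(N-1)) = \textrm{vec}^{\star}(D(N))$ at the end of the horizon. Together with the already-established optimal solution $u^2(k)=L(k)x^2(k)$ of~\eqref{decomopt2} and the decompositions of Lemmas~\ref{lemma1} and~\ref{lemma2}, this pins down the full optimal controller.

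The substantive work is entirely in the preceding lemma --- converting the Trace-form expression for $J^1_{u}$ into the data $Y_k,b_k$, performing the reindexing, and proving that $Y_k$ is positive definite. What remains here is only the decoupling bookkeeping (checking that no gain is shared between two summands, which legitimizes term-by-term minimization), the boundary indexing at $k=N$, and the routine fact that a strictly convex unconstrained quadratic is minimized at the solution of its normal equations; I do not expect any deeper obstacle.
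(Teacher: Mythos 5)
Your proposal is correct and follows essentially the same route as the paper: the paper derives this theorem directly from the preceding lemma, which has already reindexed $J^1_u$ into $N$ decoupled unconstrained quadratics in $\textrm{vec}^{\star}(D(k))$ with positive definite Hessians $[Y_k]_{SS}=E^TY_kE$, so each is minimized by solving its normal equations. Your additional remarks on the decoupling bookkeeping and on reading off $F(k-1)$ and $M(k)$ from the column blocks of $D(k)$ are accurate elaborations of what the paper leaves implicit.
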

\subsection{Steady State Controller Derivation}
Assume that the solution to algebraic Riccati equation (\ref{lqg}), $X(k)$, converges to
the stabilizing solution as $k$ approaches $\infty$:
\begin{align*}
X=A^TXA+Q+A^TXB(B^T XB+R)^{-1}B^T XA
\end{align*}
Since $H(k)$ and $L(k)$ are specified
by $X(k)$, they respectively converge to matrices $H$ and $L$ as follows:
\begin{small}
\begin{align*}
H=B^TXB+R,~L=(B^TXB+R)^{-1}B^T XA
\end{align*}
\end{small}
Then $Y_k$ and $b_k$ will approach the values of $Y$ and $b$ given by
\begin{small}
\begin{align*}
Y&=\begin{bmatrix} W\otimes (H+B^TL^THLB) & -W\otimes B^TL^TH\\ -W\otimes HLB & W\otimes H\end{bmatrix}\\
b&=\begin{bmatrix} W\otimes H \\ 0 \end{bmatrix}\textrm{vec}(L)+\begin{bmatrix} -W\otimes B^TL^TH \\ W\otimes H \end{bmatrix}\textrm{vec}(LA)
\end{align*}
\end{small}
Thus, the optimal gains are calculated to be
\begin{small}
\begin{align*}
\textrm{vec}^{\star}(F)=\begin{bmatrix} I & 0\end{bmatrix} [Y]_{SS}^{-1}[b]_S\\
\textrm{vec}^{\star}(M)=\begin{bmatrix} 0 & I\end{bmatrix} [Y]_{SS}^{-1}[b]_S
\end{align*}
\end{small}
\subsection{Estimation Structure}
Having determined the optimal controller, we turn now to analyze this result. Define $\zeta(k)=x(k)-w(k-1)$. Hence, we obtain the
following state-space system
$$
\zeta(k+1)=Ax(k)+Bu(k)
$$
with initial condition $\zeta(0)=0$. Note that the assumptions about the information structure and sparsity structure of $A$ and $B$ guarantee that each vehicle can update $\zeta(k)$ at time $k$. For example, consider Vehicle $1$. Since Vehicle $1$ has access to
$x_2(k-1)$ at time $k$, It can construct $\zeta_1(k)=A_{11}x_1(k-1)+A_{12}x_2(k-1)+B_1u_1(k-1)$.
Letting $\xi(k)=\textbf{E}\{x(k)|x(0:k-2)\}$ the optimal control policy can be written as
\begin{small}\begin{align*}
u(k)=F(x(k)-\zeta(k))+M(x(k-1)-\zeta(k-1))+L\xi(k)
\end{align*}\end{small}
In order to fully specify $u(k)$, the conditional estimates $\xi(k)$, as well as the matrices $L$, $F$ and $G$ must be computed. We have
\begin{small}
\begin{align*}
\xi(k+1)&=\textbf{E}\{x(k+1)\mid x(0:k-1)\}\nonumber\\
&\hspace{0cm}=A\textbf{E}\{x(k)\mid x(0:k-1)\}+B\textbf{E}\{u(k)\mid x(0:k-1)\}\\
&=A\zeta(k)+BM(x(k-1)-\zeta(k-1))+BL\xi(k)
\end{align*}
\end{small}
\section{Numerical Results}
\label{sec:Performance}
We evaluate the performance of the system with the controller by giving an example of a realistic scenario that HDV platoons often face on the road. In practice, varying traffic conditions often mandate a deviation in the lead vehicle's velocity. Therefore, integral action for the lead vehicle is added as a state to the system presented in \eqref{eq:plantModel}, to model such disturbances.

\begin{figure}[t]
\begin{center}
\includegraphics[width=7.4cm]{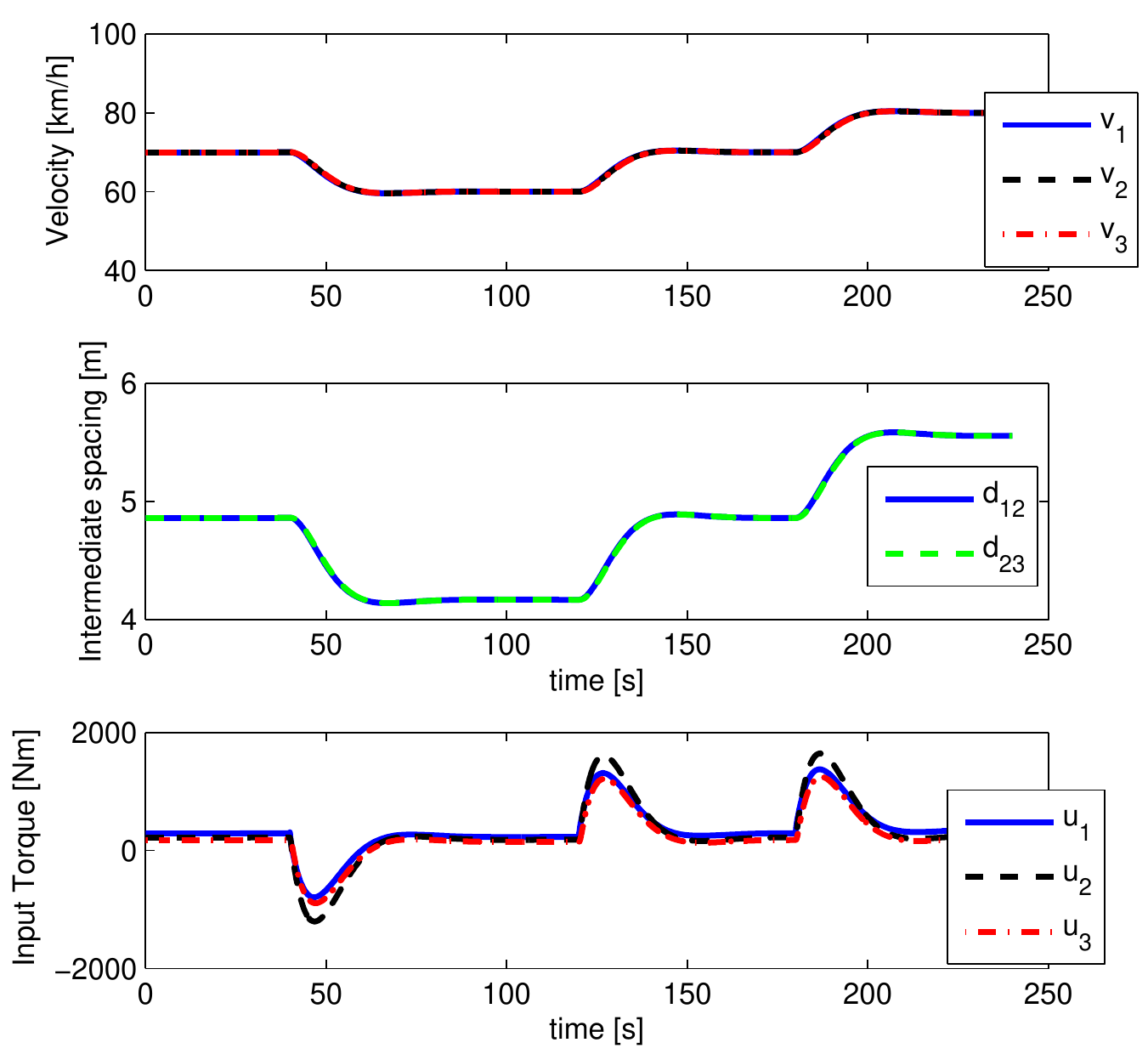}    
\caption{Three HDV platoon, where a disturbance in velocity of the lead vehicle is imposed. The top plot shows the velocity trajectories for the $M=3$ HDV platoon, the bottom plot shows the intermediate spacings, and the bottom plot shows the control inputs. The trajectories are obtained through the optimal distributed control and subindexed $i$, where $i=1,2,3$ denote the platoon position index.}
\label{fig:disturbance}
\end{center}
\end{figure}

We consider a heterogeneous platoon, where the masses are set to $[m_1, m_2, m_3]=[30000, 40000, 30000]$\,kg. All the vehicles are assumed to be traveling in the steady state velocity  $v_0=19.44$\,m/s ($70$\,km/h) at time gap $\tau=0.25$\,s, which gives an intermediate distance of $d_0=4.86$. The maximum engine and braking torque for a commercial HDV varies based upon vehicle configuration but can be approximated to be 2500\,Nm and 60000\,Nm/Axle respectively.

State disturbances as well as several lead vehicle deviation disturbances are imposed on the system (Fig.~\ref{fig:disturbance}). The lead vehicle deviation disturbances can be explained by the following scenario. The platoon travels along a road where the road speed is 70\,km/h. Suddenly a slower vehicle enters the lane through a shoulder path at a lower speed. The lead vehicle must therefore reduce its speed to 60\,km/h, in turn forcing the follower vehicles to reduce their speed and adapt their relative distance accordingly. After a while, the slower vehicle increases its speed to the road speed of 70\,km/h and no longer inhibits the platoon. Hence, the lead vehicle again resumes the road speed and the follower vehicles adapt the speed and distance automatically as well. Finally, the platoon arrives at a point where the road speed is changed to 80\,km/h.

Fig.~\ref{fig:disturbance} shows the velocity trajectories of three HDV platoon in the top plot, the corresponding intermediate spacings in the middle plot, and the required control input to handle the disturbances in the bottom plot. The trajectories nearly lie on top of each other, showing that the proposed controller performs a tight control and the disturbances are handled well. There is no overshoot in the velocity or intermediate spacing tracking. Furthermore, the control input is well within the feasible physical range. The weight normalized control input energy required to handle the imposed disturbances is reduced by 15\,\% for the second vehicle and 14\,\% for the third vehicle, with respect to the first vehicle. Hence, the controller displays a fuel efficient behavior, since the input energy is directly proportional to the fuel consumption. The theoretical value, in this case, for the cost function with the proposed optimal distributed control is only 0.01\,\% higher than a fully centralized control with full state information at all times. On the other hand, the proposed controller produces a 67\,\% lower theoretical cost compared to a centralized control with two step time delays.

\section{SUMMARY AND CONCLUSIONS}
\label{sec:Concl}

We have presented an analytical controller, which is optimal under a delayed information sharing pattern for chain structures. A discrete time HDV platoon model has been derived that includes physical coupling with both neighboring vehicles. The results show that the cost function with proposed controller is very close to the fully centralized cost and better than the cost for the centralized case with two time delays. Hence, the cost function can be significantly reduced by considering additional available local information. The controller maintains a tight control even though time delays are imposed.

For future work, we plan to extend to the presented results to $M$ HDVs and arbitrary time step delays, which is relevant for HDV platooning.


\bibliography{IEEEabrv,References}

\appendix
\subsection{Preliminaries}
\begin{proposition}[\cite{Horn:96}]\emph{ If $A$, $B$, $C$, $D$ and $X$ are suitably dimensioned matrices, then}
\begin{enumerate}
\item[a)] $\textrm{vec}(AXB)=(B^T\otimes A)\textrm{vec}(X)$,
\item[b)] $(A\otimes B)(C\otimes D)=(AC)\otimes (BD)$,
\item[c)] \emph{If $A$ and $B$ are positive definite, then so is $A\otimes B$,}
\item[d)] $\textbf{Tr}\{AXBX^T\}=\textrm{vec}^T(X)(B^T\otimes  A)\textrm{vec}(X)$,
\item[e)] $(A\otimes B)^{-1}=A^{-1}\otimes B^{-1}$.\emph{($A$ and $B$ are nonsingular)}
\end{enumerate}
\label{prop1}
\end{proposition}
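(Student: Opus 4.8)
The plan is to establish the mixed-product rule (b) first, since it is the workhorse from which (c) and (e) follow at once, and then to prove the vectorization identity (a), from which (d) follows. I would prove (b) by a direct block computation: writing $A\otimes B$ in block form with $(i,j)$ block $a_{ij}B$ and $C\otimes D$ with $(j,k)$ block $c_{jk}D$, the $(i,k)$ block of their product is $\sum_j(a_{ij}B)(c_{jk}D)=\bigl(\sum_j a_{ij}c_{jk}\bigr)BD=(AC)_{ik}BD$, which is exactly the $(i,k)$ block of $(AC)\otimes(BD)$. Part (e) is then immediate, since by (b) one has $(A\otimes B)(A^{-1}\otimes B^{-1})=(AA^{-1})\otimes(BB^{-1})=I\otimes I=I$, identifying $A^{-1}\otimes B^{-1}$ as the inverse.

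Next I would prove (a) column by column. The $j$th column of $AXB$ equals $(AX)b_j$, where $b_j$ is the $j$th column of $B$; since the columns of $AX$ are the vectors $Ax_i$ (with $x_i$ the columns of $X$), this equals $\sum_i B_{ij}Ax_i$. Stacking these columns over all $j$ reproduces precisely the block-row action of $B^T\otimes A$ on $\textrm{vec}(X)$, because the $(j,i)$ block of $B^T\otimes A$ is $(B^T)_{ji}A=B_{ij}A$. Hence $\textrm{vec}(AXB)=(B^T\otimes A)\textrm{vec}(X)$.

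Part (d) then follows by combining (a) with the inner-product identity $\textbf{Tr}\{P^TQ\}=\textrm{vec}^T(P)\textrm{vec}(Q)$, itself obtained by summing products of matching entries. By cyclicity of the trace, $\textbf{Tr}\{AXBX^T\}=\textbf{Tr}\{X^T(AXB)\}=\textrm{vec}^T(X)\textrm{vec}(AXB)$, and substituting (a) gives $\textrm{vec}^T(X)(B^T\otimes A)\textrm{vec}(X)$. For (c), I would invoke the spectral decompositions $A=U\Lambda U^T$ and $B=VMV^T$ with $U,V$ orthogonal and $\Lambda,M$ diagonal with strictly positive entries. Applying (b) repeatedly yields $A\otimes B=(U\otimes V)(\Lambda\otimes M)(U\otimes V)^T$, where $U\otimes V$ is orthogonal (again by (b), $(U\otimes V)(U\otimes V)^T=(UU^T)\otimes(VV^T)=I$) and $\Lambda\otimes M$ is diagonal with positive diagonal entries; thus $A\otimes B$ is symmetric and orthogonally similar to a positive diagonal matrix, hence positive definite.

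I expect no real obstacle, since all five identities are classical; the only step requiring care is the index bookkeeping in (a), matching the column-stacking of $AXB$ against the block pattern $(j,i)\mapsto B_{ij}A$ of $B^T\otimes A$. Once (a) and (b) are in place, parts (c), (d), and (e) are short corollaries, so I would present (b) and (a) in full and relegate the remaining deductions to one or two lines each.
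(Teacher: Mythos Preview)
Your proof is correct and follows the standard textbook route for these Kronecker-product identities. However, note that the paper does not supply its own proof of this proposition: it is stated with a citation to \cite{Horn:96} and used as a black box in the appendix (specifically in the proof of Lemma~4). So there is no ``paper's own proof'' to compare against; your write-up would simply replace a citation with a self-contained argument, which is perfectly acceptable and arguably an improvement in a self-contained exposition.
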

\begin{proposition}[\cite{Astrom:70}]\emph{
Let $x$ and $y$ be zero-mean random vectors with a jointly Gaussian distribution. Assume $S$ be
a symmetric matrix. Then the following facts hold:
\begin{enumerate}
\item[a)] $\textbf{E}\{x^TSx\}=\textbf{Tr}\bigl\{S\textbf{{E}}\{xx^T\}\bigr\}$.
\item[b)] $\textbf{E}\{x|y\}$ and $x-\textbf{E}\{x|y\}$ are independent.
\end{enumerate}
\label{prop2}}
\end{proposition}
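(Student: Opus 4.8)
The proposition splits into two independent, standard facts, and I would prove each separately. For part~(a) I would use the ``trace trick'': since $x^TSx$ is a scalar it equals its own trace, and by the cyclic property of the trace $x^TSx = \textbf{Tr}\{x^TSx\} = \textbf{Tr}\{Sxx^T\}$; taking expectations and interchanging the (entrywise-linear, finite-dimensional) operations $\textbf{E}\{\cdot\}$ and $\textbf{Tr}\{\cdot\}$ then gives $\textbf{E}\{x^TSx\} = \textbf{E}\{\textbf{Tr}\{Sxx^T\}\} = \textbf{Tr}\{S\,\textbf{E}\{xx^T\}\}$. This step uses only finiteness of the second moments of $x$; neither Gaussianity nor symmetry of $S$ is actually needed.

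For part~(b) set $\hat{x}\triangleq\textbf{E}\{x\mid y\}$ and $e\triangleq x-\hat{x}$. The crux is that, for a jointly Gaussian pair, the conditional mean is a linear function of $y$ (linear rather than merely affine because both vectors are zero-mean). I would establish this by producing a matrix $K$ with $\textbf{E}\{(x-Ky)y^T\}=0$ --- for instance $K=\Sigma_{xy}\Sigma_{yy}^{\dagger}$, which works because the rows of $\Sigma_{xy}$ lie in the row space of $\Sigma_{yy}$. Then $x-Ky$ and $y$ are uncorrelated, and since $(x-Ky,\,y)$ is jointly Gaussian (a linear image of $(x,y)$) they are in fact independent; hence $\textbf{E}\{x\mid y\} = \textbf{E}\{x-Ky\mid y\}+Ky = \textbf{E}\{x-Ky\}+Ky = Ky$, so $\hat{x}=Ky$ and $e=x-Ky$. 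Because $e$ is independent of $y$ and $\hat{x}=Ky$ is a deterministic function of $y$, $e$ and $\hat{x}$ are independent, which is the assertion.

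The only point needing care is the existence of such a $K$ when $\Sigma_{yy}$ is singular, i.e. that the conditional mean is genuinely linear in that case; this follows from the Moore--Penrose identity $\Sigma_{xy}\Sigma_{yy}^{\dagger}\Sigma_{yy}=\Sigma_{xy}$, valid because positive semidefiniteness of the joint covariance forces the row space of $\Sigma_{xy}$ into the row space of $\Sigma_{yy}$. Beyond this bookkeeping, both parts are immediate --- part~(a) is pure linear algebra, and part~(b) rests only on the standard fact that uncorrelated jointly Gaussian vectors are independent --- so I do not expect a substantive obstacle.
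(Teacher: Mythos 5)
Your proof is correct. The paper states this proposition without proof, citing \cite{Astrom:70}, so there is no in-paper argument to compare against; your derivation --- the trace/cyclicity identity for part (a), and for part (b) exhibiting the linear regressor $K=\Sigma_{xy}\Sigma_{yy}^{\dagger}$ so that $x-Ky$ is uncorrelated with, hence by joint Gaussianity independent of, $y$, which identifies $\textbf{E}\{x|y\}=Ky$ --- is the standard textbook route, and your handling of the singular-$\Sigma_{yy}$ case via the identity $\Sigma_{xy}\Sigma_{yy}^{\dagger}\Sigma_{yy}=\Sigma_{xy}$ is a correct and welcome refinement.
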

\begin{proposition}[\cite{Horn:96}]\emph{
Suppose that a symmetric matrix is partitioned as \begin{small}$\begin{bmatrix} A & B \\ B^T & C\end{bmatrix}$\end{small},
where $A$ and $C$ are square. This matrix is positive definite if and only if $C$ and $\bigtriangleup=A-BC^{-1}B^T$ are positive
definite.}
\end{proposition}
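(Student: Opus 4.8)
The plan is to exhibit an explicit block congruence that reduces the partitioned matrix to block-diagonal form, at which point positive definiteness can be read off blockwise. Write $M=\begin{bmatrix} A & B \\ B^T & C\end{bmatrix}$. The key algebraic identity is the factorization
\begin{small}
\begin{align*}
M=P^T\begin{bmatrix} \bigtriangleup & 0 \\ 0 & C\end{bmatrix}P,\qquad P=\begin{bmatrix} I & 0 \\ C^{-1}B^T & I\end{bmatrix},
\end{align*}
\end{small}
which I would verify by direct multiplication, using $\bigtriangleup+BC^{-1}B^T=A$ and the symmetry $C^{-T}=C^{-1}$. Since $P$ is unit block-triangular it is invertible, so $M$ and the middle block-diagonal matrix are congruent; congruence by an invertible matrix preserves the sign of the quadratic form, hence $M\succ 0$ if and only if $\mathrm{diag}(\bigtriangleup,C)\succ 0$, and the latter holds exactly when both diagonal blocks are positive definite. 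That single observation establishes both implications at once — but only once $C$ is known to be invertible, which is where the two directions require slightly different bookkeeping.

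For the reverse direction ($C\succ 0$ and $\bigtriangleup\succ 0$ imply $M\succ 0$) there is nothing extra to do: positive definiteness of $C$ guarantees that $C^{-1}$ exists, so $\bigtriangleup$ and $P$ are well defined, and the factorization immediately yields $M\succ 0$. For the forward direction I must first produce the invertibility of $C$ before the Schur complement is even meaningful. This I would extract by restricting the quadratic form: if $M\succ 0$, then for every nonzero $y$,
\begin{small}
\begin{align*}
\begin{bmatrix} 0 & y^T\end{bmatrix}M\begin{bmatrix} 0 \\ y\end{bmatrix}=y^TCy>0,
\end{align*}
\end{small}
so $C\succ 0$ and in particular $C^{-1}$ exists. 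With $C$ invertible the factorization applies verbatim, and reading the congruence in the other direction, positive definiteness of $M$ forces $\mathrm{diag}(\bigtriangleup,C)\succ 0$, hence $\bigtriangleup\succ 0$.

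The main obstacle is not the algebra, which is routine once the factorization is written down, but this ordering issue: the Schur complement $\bigtriangleup=A-BC^{-1}B^T$ presupposes $C$ invertible, so in the ``only if'' direction I must first extract invertibility of $C$ from $M\succ 0$ and only then invoke the congruence. A clean alternative that sidesteps the factorization is to complete the square in the scalar quadratic form $x^TAx+2x^TBy+y^TCy$: for fixed $x$, the minimum over $y$ (given $C\succ 0$) is attained at $y^\star=-C^{-1}B^Tx$ with value $x^T\bigtriangleup x$, so positivity of $M$ on all nonzero vectors reduces to positivity of $\bigtriangleup$. I would present the congruence version as the primary proof, since it is symmetric in the two implications and exposes the underlying structure most transparently.
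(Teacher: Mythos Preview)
Your proof is correct and is the standard block-congruence argument for the Schur-complement characterization of positive definiteness. However, the paper does not actually prove this proposition: it is stated in the appendix as a cited result from \cite{Horn:96} and used as a tool in the proof of Lemma~4, with no accompanying argument. There is therefore nothing in the paper to compare your approach against; what you have written is essentially the proof one finds in the cited reference, including the care you take to first extract $C\succ 0$ from $M\succ 0$ before invoking the factorization.
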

\subsection{Proof Lemma 3}
First note that $x^2(k)=x(k)-x^1(k)$. Thus,
\begin{align*}
x^2(k+1)=&Ax(k)+Bu(k)-\bigl(A+BF(k)\bigr)w(k-1)\\
=&Ax^1(k)+Ax^2(k)+Bu^1(k)+Bu^2(k)\\
&-\bigl(A+BF(k)\bigr)w(k-1)
\end{align*}
The proof is completed by substituting $x^1(k)=w(k-1)+\bigl(A+BF(k-1)\bigr)w(k-2)$ and $u^1(k)=F(k)w(k-1)+M(k)w(k-2)$ into the second line.
\subsection{Proof Lemma 4}
The equivalence of optimization problems (\ref{decomopt1}) and (\ref{800}) follows simply by vectorization of matrices.
First note that \begin{small}$\textrm{vec}\bigl(F(k-1)\bigr)=\begin{bmatrix} I & 0\end{bmatrix}\textrm{vec}\bigl(D(k)\bigr)$\end{small}. Thus
\begin{small}
\begin{align*}
\textrm{vec}(F^--L^-)&=[I\;0]\textrm{vec}\bigl(D(k)\bigr)-\textrm{vec}(L^-)
\end{align*}
\end{small}
From Propositions $3$b and $3$d, we have
\begin{small}
\begin{align*}
&\hspace{-1.5cm}\textbf{Tr}\left\{H^-(F^--L^-)W(F^--L^-)^T\right\}\\
=&\textrm{vec}^T\bigl(D(k)\bigr)\begin{bmatrix} W\otimes H^-& 0\\ 0 & 0\end{bmatrix}\textrm{vec}\bigl(D(k)\bigr)\\
&-2\textrm{vec}^T(L^-)\begin{bmatrix} W\otimes H^- & 0\end{bmatrix}\textrm{vec}\bigl(D(k)\bigr)\\
&+\textrm{vec}^T(L^-)(W\otimes H^-)\textrm{vec}(L^-)
\end{align*}
\end{small}
Likewise, $\textrm{vec}\bigl(M(k)\bigr)=\begin{bmatrix} 0 & I\end{bmatrix}\textrm{vec}\bigl(D(k)\bigr)$. Then
\begin{small}
\begin{align*}
&\textrm{vec}\bigl(M(k)-L(A+BF^-)\bigr)=\\
&\hspace{2cm}\bigl([0\;\;I]-[LB\;\;0]\bigr)\textrm{vec}\bigl(D(k))\bigr)-\textrm{vec}(LA)
\end{align*}
\end{small}
Therefore,
\begin{small}
\begin{align*}
&\hspace{-5mm}\textbf{Tr}\bigl\{H\bigl(M-L(A+BF^-)\bigr)W\bigl(M-L(A+BF^-)\bigr)^T\bigr\}\\
=&\textrm{vec}^T(D)\begin{bmatrix} W\otimes B^TL^THLB & -W\otimes B^TL^TH\\ -W\otimes HLB & W\otimes H\end{bmatrix}\textrm{vec}(D)\\
&-2\textrm{vec}^T(LA)\begin{bmatrix} -W\otimes HLB & W\otimes H\end{bmatrix}\textrm{vec}\bigl(D(k)\bigr)\\
&+\textrm{vec}^T(LA)(W\otimes H)\textrm{vec}(LA)
\end{align*}
\end{small}
After Substituting these values back into $J_{u}^1$, using $\textrm{vec}(D)=E\textrm{vec}^{\star}(D)$, and eliminating constant terms, we arrive at (\ref{800}).

The only part that remains to be proved is that $Y_k$ is positive definite. Since $W$ and $H(k)$ are positive definite, $Y_{22}(k)$ is positive definite according to
Proposition $3$c. Proposition~$3$e then implies that $Y_{22}^{-1}(k)=W^{-1}\otimes H^{-1}(k)$. From Proposition $3$b, we have
\begin{align*}
Y_{12}(k)Y_{22}^{-1}(k)Y^T_{12}(k)=W\otimes B^TL^T(k)H(k)L(k)B
\end{align*}
Consequently,
\begin{align*}
\bigtriangleup(k)&=Y_{11}(k)-Y_{12}(k)Y_{22}^{-1}(k)Y^T_{12}(k)\\
&=W\otimes H(k-1)
\end{align*}
Since $\bigtriangleup(k)$ and $Y_{22}(k)$ are positive definite, from Proposition $5$, $Y_k$ is positive definite. Finally note that $E^T$ has full row rank, so $[Y]_{SS}=E^TY_kE$ is positive definite.

\end{document}